\newtheorem{theorem}{Theorem}[section]
\newtheorem{lemma}[theorem]{Lemma}
\newtheorem{meta-theorem}[theorem]{Meta-Theorem}
\newtheorem{corollary}[theorem]{Corollary}
\newtheorem{definition}[theorem]{Definition}
\definecolor{darkgreen}{rgb}{0,0.5,0}
\crefname{theorem}{Theorem}{Theorems}
\Crefname{lemma}{Lemma}{Lemmas}
\Crefname{remark}{Remark}{Remarks}
\Crefname{observation}{Observation}{Observations}
\Crefname{claim}{Claim}{Claims}
\algnewcommand\algorithmicswitch{\textbf{switch}}
\algnewcommand\algorithmiccase{\textbf{case}}
\newcommand{\congest}{$\mathsf{CONGEST}$\xspace}
\newcommand{\local}{$\mathsf{LOCAL}$\xspace}
\newcommand{\poly}{\operatorname{\text{{\rm poly}}}}
\DeclareMathOperator{\E}{\mathbb{E}}
\newcommand{\Abs}[1]{\left\lvert#1\right\rvert}
\renewcommand{\paragraph}[1]{\medskip\noindent {\bf #1.}}
\begin{document}
\date{}
\title{Distributed MIS with Low Energy and Time Complexities}

\author{
  Mohsen Ghaffari \\
  \small MIT \\
  ghaffari@mit
  \and
  Julian Portmann\\
  \small ETH Zurich \\
  pjulian@inf.ethz.ch
}

\maketitle

\begin{abstract}
	We present randomized distributed algorithms for the maximal independent set problem (MIS) that, while keeping the time complexity nearly matching the best known, reduce the energy complexity substantially.
	These algorithms work in the standard CONGEST model of distributed message passing with $O(\log n)$ bit messages.
	The time complexity measures the number of rounds in the algorithm.
	The energy complexity measures the number of rounds each node is awake; during other rounds, the node sleeps and cannot perform any computation or communications.

	Our first algorithm has an energy complexity of $O(\log\log n)$ and a time complexity of $O(\log^2 n)$.
	Our second algorithm is faster but slightly less energy-efficient: it achieves an energy complexity of $O(\log^2 \log n)$ and a time complexity of $O(\log n \cdot \log\log n \cdot \log^* n)$.
	Thus, this algorithm nearly matches the $O(\log n)$ time complexity of the state-of-the-art MIS algorithms while significantly reducing their energy complexity from $O(\log n)$ to $O(\log^2 \log n)$.
\end{abstract}

\section{Introduction and Related Work}
We present randomized distributed algorithms for the maximal independent set (MIS) problem that perform well in both time and energy complexity measures.
The MIS problem has been one of the central problems in the study of distributed graph algorithms.
Much of the research on this problem has focused on the time complexity measure, i.e., the maximum time each node spends running the distributed algorithm.
The state-of-the-art time complexity is $O(\log n)$~\cite{luby86, alon1986fast}; this has remained the best-known bound for general graphs for nearly four decades.

Another important measure, which is vital in applications such as sensor and wireless networks, is the amount of energy spent by each node.
Roughly speaking, the energy consumed by a node can be approximated as being proportional to the time that the node is active in the distributed algorithm and performs computations or communications.
In contrast to the time complexity measure, which has been widely studied, we know considerably less about the energy complexity measure.
Standard algorithms have their energy complexity equal to the time complexity because usually it is unclear how to let each node be inactive during a substantial part of the algorithm while retaining the algorithm's correctness and time complexity.
In this paper, we show MIS algorithms whose time complexity nearly matches the best-known bound while achieving a substantially reduced energy complexity.
We next state our formal results, after reviewing the model and the precise complexity measures, and the state of the art.

\subsection{Setup}
\paragraph{Distributed Model}
We work with the standard synchronous model of distributed message passing, often referred to as the \congest model~\cite{peleg00}.
The network is abstracted as an $n$-node undirected graph $G=(V, E)$, where each node represents one computer.
These nodes operate in synchronous rounds.
Per round, each node performs some computations on the data that it holds, and then it can send one message to each of its neighbors in $G$.
The standard assumption is that each message has size $B=O(\log n)$, which is sufficient to describe constant many nodes or edges and values polynomially bounded in $n$.
The variant where messages are allowed to be unbounded is often called the \local model~\cite{linial92}.
The messages are delivered by the end of the round.
Then, the algorithm proceeds to the next round.
In formulating distributed graph problems, the assumption is that the nodes do not know the global topology of $G$.
Initially, each node knows only its own neighbors, perhaps along with some bounds on global parameters, e.g., a polynomial bound on $n$.
At the end of the algorithm, each node should know its part of the output, e.g., in the problem of computing an MIS, each node should know whether it is in the computed set or not.

\paragraph{Time and Energy Complexities}
The primary measure of interest has been the \textit{time complexity}, sometimes also called round complexity.
This is the maximum number of rounds each node runs the algorithm, i.e., the time from the start of the algorithm until the node terminates.
The \textit{energy complexity} is a sharper version that measures only the rounds in which a node is actually active and does some computations or communications.
Let us say per round each node is either \textit{active/awake} or \textit{sleeping}.
During a round in which the node sleeps, it cannot perform any computation, and it does not send or receive any messages.
Further, it also cannot be woken up by any other node, it only wakes up after a scheduled amount of time.
The \textit{energy complexity} of the algorithm is defined as the maximum number of rounds each node is active/awake.
Clearly, the energy complexity is upper bounded by the time complexity.
We believe it is interesting---from both the perspective of applications and the view of technical algorithm-design challenges---to devise algorithms that have a much better energy complexity, ideally without a significant sacrifice in the time complexity.

The energy complexity measure---considering awake or sleeping rounds and measuring the number of awake rounds---has been studied for over two decades.
See, e.g., \cite{jurdzinski2002efficient, jurdzinski2002energy, kardas2013energy, chang2017exponential, chang2018energy, chang2020energy, chatterjee2020sleeping, barenboim2021deterministic, augustine2022brief, ghaffari2022average, dufoulon2022sleeping}.
Much of this was in the context of the radio network model\footnote{In the radio network model, per round, each node is transmitting or listening, and a listening node receives a message only if there is exactly one transmitting neighbor.
  When adding the energy complexity consideration, we again have awake or sleeping rounds, and a node can be transmitting or listening only in awake rounds.}.
More recent work investigates this measure in the context of the more basic synchronous message-passing models of distributed graph algorithms, namely the \congest and \local models~\cite{chatterjee2020sleeping, barenboim2021deterministic, augustine2022brief, ghaffari2022average, dufoulon2022sleeping}.

We remark that sometimes the word \textit{awake-complexity} is used instead of energy-complexity.
Moreover, sometimes this variant of the \congest model that considers sleeping nodes is called \textit{the sleeping model}~\cite{chatterjee2020sleeping}.
In this article, we decided to refrain from the latter terminology as we think sleeping is a separate aspect and it can be considered for any of the distributed models, as it has been done for the radio network model, the \congest model, the \local model, etc.

\subsection{Prior Work}
Let us start with time complexity alone.
Luby's $O(\log n)$ round randomized MIS algorithm~\cite{luby86, alon1986fast} remains state of the art in time complexity.
Much research has been done on time complexity, and there are faster algorithms for special graph families; see e.g.,~\cite{barenboim2016locality, gmis, ghaffari2019distributed}.
The best known lower bound is $\Omega(\sqrt{\log n/\log\log n })$~\cite{kuhn16_jacm}.
For the restriction to deterministic algorithms, a best lower bound of $\Omega(\log n/\log\log n)$ is known~\cite{balliu2021lower}, and there are deterministic algorithms within a polynomial of this, i.e., with round complexity $\poly (\log n)$~\cite{rozhovn2019nd, ghaffari2021improvedNC,faour2022localRounding}.
All of these have a high energy complexity, and in particular, Luby's $O(\log n)$-round algorithm also has energy complexity $O(\log n)$.
Ideally, we would like the energy complexity to be much lower.

We next discuss prior results that studied MIS from both energy and time views.
Some previously known algorithms can provide bounds on the \textit{average} energy spent per node, i.e., the average over all nodes of the summation of their awake rounds.
This is instead of the energy complexity defined above that measures the maximum energy spent by each node, i.e., the maximum over all nodes of the number of rounds in which this node is awake.
Chatterjee, Gmyr, and Pandurangan\cite{chatterjee2020sleeping} presented an MIS algorithm that achieves an $O(1)$ average energy and $O(\log^{3.41} n)$ time complexity.
Ghaffari and Portmann~\cite{ghaffari2022average} improved this to an algorithm that achieves an $O(1)$ average energy with a time complexity of $O(\log n)$, which matches the state-of-the-art time complexity of MIS.
However, these algorithms still have $\Omega(\log n)$ energy complexity; that is, some nodes have to be awake and active during at least $\Theta(\log n)$ rounds.

In a recent work, Dufoulon, Moses Jr., and Pandurangan~\cite{dufoulon2022sleepingarxiv} presented algorithms that achieve a sublogarithmic energy complexity.
Their first algorithm reaches an energy complexity of $O(\log\log n)$, but at the expense of having a time complexity of $\poly(n)$.
Note that this time complexity is exponentially higher than the state-of-the-art $O(\log n)$ complexity of MIS.
Their second algorithm achieves an energy complexity of $O(\log \log n \cdot \log^*n)$ and a time complexity of $\tilde{O}(\log^3 n)$.
However, it should be noted that this algorithm needs messages of size at least $\Omega(\log^3 n)$, as it performs some non-trivial topology gathering.
If one limits the algorithm to the more standard $O(\log n)$ message size of the \congest model, then the algorithm's energy-complexity blows up to $\tilde{\Omega}(\log^2 n)$, because some nodes need to send $O(\log^3 n)$ bits of information.
This energy complexity would make the result uninteresting because Luby's classic algorithm~\cite{luby86} has better time and energy complexities.
Concurrent to our work, Dufoulon, Moses Jr., and Pandurangan~\cite{dufoulon2022sleeping} have also improved the message size of their algorithms to $O(\log n)$, while maintaining the same round and energy complexities.
This appears in the conference version~\cite{dufoulon2022sleeping}, as well as in an updated version of their preprint~\cite{dufoulon2022sleepingarxiv}.
We also comment that Hourani et al.~\cite{hourani2022awake} gave some related results for MIS in the very special case of random graphs embedded in low-dimensional geometric spaces, but their results are subsumed by those of \cite{dufoulon2022sleeping}.

\subsection{Our Results}
We are interested in algorithms that are (almost) as fast as the state-of-the-art but use substantially less energy.
That is, algorithms that achieve good energy complexity without (considerably) sacrificing the more standard measure of time complexity.
We remark that this is technically a far more challenging problem than if we give the algorithm much more time.
The algorithms we desire have to achieve time complexity almost equal to the best known, while each node sleeps almost all the time, except for an exponentially tiny fraction of the rounds.

Our first algorithm obtains the best known $O(\log\log n)$ bound of the energy complexity while running in $O(\log^2 n)$ rounds.
\begin{theorem}
	\label[theorem]{thm:alg1}
	There is a randomized distributed MIS algorithm with time complexity $O(\log^2 n)$ and energy complexity $O(\log\log n)$.
	That is, the algorithm runs for $O(\log^2 n)$ rounds, each node is awake in only $O(\log\log n)$ rounds, and the algorithm computes an independent set of vertices such that, with high probability, this set is maximal.
\end{theorem}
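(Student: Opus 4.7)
The plan is to combine a Luby/Ghaffari-style MIS dynamic with a sparse, globally synchronized wake-up schedule. I would have every node wake only at the rounds $r_i = 2^i$ for $i = 0, 1, \ldots, L$ with $L = \Theta(\log\log n)$, so that the horizon is $r_L = O(\log^2 n)$ and each node is awake at $L+1 = O(\log\log n)$ rounds. Because the schedule is shared across all nodes, any node that wakes finds all of its neighbors also awake, so a constant number of messages can be exchanged per wake.

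At each wake-up, every still-undecided node performs one meta-step of a Luby/Ghaffari-style MIS update: mark itself with the current probability driven by its effective-degree estimate, exchange marks with neighbors, and join/remove/defer accordingly. I would then appeal to the shattering framework to argue that after these $L$ meta-steps the subgraph of undecided nodes has, with high probability, no connected component of size larger than $\poly(\log n)$.

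The final (longest) wake slot spans $\Theta(\log^2 n)$ rounds and is used to finish the small residual components. For each component of size $s = \poly(\log n)$, I would elect a leader, gather the topology via a pipelined convergecast along a BFS tree, compute MIS locally, and broadcast the result back. To respect the energy budget, wakes inside this slot follow a second doubling schedule indexed by BFS depth; since the depth is at most $\poly(\log n)$, this doubling schedule has only $O(\log\log n)$ entries, so each node wakes at most $O(\log\log n)$ additional times within the slot. Adding the two contributions leaves a total worst-case energy of $O(\log\log n)$.

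The main obstacle I anticipate is reconciling the $\Omega(\log n)$ rounds generically required by Luby/Ghaffari dynamics with the $O(\log\log n)$ meta-steps that the sparse wake-schedule allows: the naive one-trial-per-meta-step analysis does not close the gap when $\Delta$ is close to $n$. The resolution I would attempt is to sharpen the per-meta-step progress argument, exploiting Ghaffari's two-regime analysis so that in the high-degree regime each meta-step drives the effective degree down by a constant factor with high probability, reaching $\poly(\log n)$ within $\Theta(\log\log n)$ meta-steps; and to carefully analyze the final-slot gather to ensure its energy cost truly stays within $O(\log\log n)$ per node. Verifying these two quantitative properties, together with correctly synchronizing ``join'' announcements so that a sleeping neighbor cannot be violated by a joiner it never heard from, is the main technical content I foresee in the proof.
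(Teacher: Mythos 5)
Your proposal diverges from the paper's approach in a way that opens a genuine gap, and the gap is exactly the one you flag yourself but do not actually resolve.

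In the paper, Phase~I runs a \emph{regularized} Luby process for $O(\log^2 n)$ genuine algorithmic rounds; the energy savings come not from compressing the Luby dynamics into $O(\log\log n)$ steps, but from the observation that once the marking probabilities are fixed in advance and each node is allowed to mark itself \emph{at most once}, each node $v$ knows a priori the single round $r_v$ in which it will be marked (or that it never is). The node participates actively only in round $r_v$; its remaining $O(\log\log n)$ awake rounds are purely for relaying ``did a neighbor join?'' information via a divide-and-conquer schedule (\Cref{lem:devAndConq}), where the schedules $S_{r_v}$ are \emph{different} for nodes with different $r_v$. Your scheme instead uses a globally shared schedule $r_i = 2^i$ with $O(\log\log n)$ synchronized wake-ups at which \emph{every} undecided node performs a full Luby meta-step. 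This gives only $O(\log\log n)$ Luby rounds total. No two-regime or sharpened per-round analysis can make this work: Luby-style degree reduction does \emph{not} drive an individual node's degree down by a constant factor per round with high probability; per round, a high-degree node is removed with only constant probability, and one needs $\Theta(\log n)$ independent trials to get a high-probability decrease. Moreover, the $\widetilde{\Omega}(\sqrt{\log n/\log\log n})$ MIS lower bound already rules out shattering to $\poly(\log n)$ components in $O(\log\log n)$ communication rounds, regardless of how cleverly each round is used. The paper circumvents this by keeping $O(\log^2 n)$ Luby rounds while decoupling ``number of rounds a node is awake'' from ``number of rounds the algorithm runs,'' which requires the ``mark at most once'' modification and the per-sampling-round schedules---ingredients your proposal does not have.

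There is a second gap in your final slot. Gathering the full topology of a $\poly(\log n)$-node component by pipelined convergecast in \congest forces internal tree nodes to forward $\poly(\log n)$ bits, i.e.\ $\poly(\log n)/\log n$ awake rounds, blowing past the $O(\log\log n)$ budget; a doubling schedule indexed by BFS depth controls \emph{when} a node first wakes, not how long it must stay awake to relay a large volume of data. The paper avoids topology gathering entirely: it first deterministically merges the low-diameter clusters into a spanning tree of depth $O(\log n)$ (\Cref{lem:mergeClusters}), then runs $O(\log n)$ parallel $1$-bit Ghaffari executions and uses an $O(1)$-energy broadcast/convergecast to select a successful one (\Cref{lem:alg1p3}). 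Both of these gaps would need to be closed before the proposal could stand as a proof.
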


Our second algorithm has time complexity almost matching the long-standing $O(\log n)$ bound~\cite{luby86}, while achieving an energy complexity of $O(\log^2\log n)$.
\begin{theorem}
	\label[theorem]{thm:alg2}
	There is a randomized distributed MIS algorithm with time complexity $O(\log n \cdot \log\log n \cdot \log^* n)$ and energy complexity $O(\log^2 \log n)$.
	That is, the algorithm runs for $O(\log n \cdot \log\log n \cdot \log^* n)$ rounds, each node is awake in only $O(\log^2\log n)$ rounds, and the algorithm computes an independent set of vertices such that, with high probability, this set is maximal.
\end{theorem}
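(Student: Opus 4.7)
The plan is to build on the $O(\log^2 n)$-time, $O(\log\log n)$-energy algorithm of \Cref{thm:alg1} and accelerate it by a shattering/degree-reduction approach, invoking its expensive sub-routine only on residual problems of size $\mathrm{poly}(\log n)$.

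In the first (shattering) phase I would run an energy-efficient Luby-style marking procedure for $O(\log n \cdot \log^* n)$ rounds, organized into $\Theta(\log\log n)$ blocks. Within each block, every surviving vertex wakes $O(\log\log n)$ times to perform a sleep-scheduled version of Luby's marking / conflict-resolution step; the $\log^* n$ factor pays for a short Linial-style symmetry breaking that coordinates wake-ups among neighbors so that whenever a vertex broadcasts its outcome, its still-live neighbors are simultaneously awake to receive it. A shattering lemma (adapted from the standard MIS literature to this sleeping variant of \congest) would then imply that, with high probability, the residual live subgraph at the end of this phase decomposes into connected components of size $\mathrm{poly}(\log n)$. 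The time spent in this phase is $O(\log n \cdot \log\log n \cdot \log^* n)$ and the per-vertex energy is $O(\log^2\log n)$.

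In the second (post-shattering) phase I would run the \Cref{thm:alg1} algorithm inside each residual component in parallel. Because each component has size $N=\mathrm{poly}(\log n)$, the sub-routine finishes in $O(\log^2 N)=O(\log^2\log n)$ rounds using $O(\log\log N)=O(\log\log\log n)$ additional wake-ups; both are absorbed into the bounds claimed in \Cref{thm:alg2}. Disjointness of the components ensures that running them in parallel does not degrade time or energy.

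The main obstacle will be justifying correctness of the shattering phase under a sleep schedule: the standard analysis of Luby-style marking assumes that every neighbor of a marked vertex is awake in the round it broadcasts, whereas here many neighbors may be asleep. The remedy I would pursue is to pre-commit each vertex to a small family of ``active super-rounds'' per block whose indices are chosen (via the above Linial-style primitive) so that every pair of neighbors shares at least one common active super-round, in which a constant-round handshake propagates join/conflict information. Showing that this local scheduling still yields the per-round constant-factor reduction in degree required by the shattering lemma, without destroying independence enough for the high-probability bound, is the technical crux of the proof.
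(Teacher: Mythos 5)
Your plan diverges substantially from the paper's and has a fatal gap at the post-shattering step. Theorem~\ref{thm:alg1} is a randomized algorithm whose success probability is $1-n^{-\Theta(1)}$ as a function of the size $n$ of the graph it is run on. When you run it on a residual component of size $N=\mathrm{poly}(\log n)$, the guarantee degrades to $1-1/\mathrm{poly}(\log n)$, and there may be as many as $n/N$ such components. A union bound then gives a failure probability that can be $\Omega(1)$, so the overall algorithm does not compute an MIS with high probability in $n$. The paper is explicit that this is the central obstacle to a na\"ive recursive call on small components: in Phase~III it first merges clusters into a spanning tree of diameter $O(\log n)$ via a deterministic Bor\r{u}vka-style procedure (Lemma~\ref{lem:mergeClusters}), then runs $\Theta(\log n)$ independent copies of a $1$-bit MIS protocol in parallel within the $O(\log n)$-bit \congest bandwidth, and finally uses a convergecast on the tree to pick one successful copy. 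That probability-boosting machinery is what makes post-shattering succeed with probability $1-n^{-\Theta(1)}$, and it is missing entirely from your proposal. Note also that this is precisely where the $\log^*n$ factor in the round bound of Theorem~\ref{thm:alg2} comes from: the deterministic cluster-merging step uses Linial's coloring on the cluster graph to compute a matching, and the paper chooses an $O(1)$-coloring in $O(\log^* n)$ rounds (rather than a $O(\log\log n)$-coloring in $2$ rounds, as in Algorithm~1) to reduce the round complexity of Phase~III.

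A second, smaller gap: your ``shattering'' phase is not specified in a way that is known to work on graphs of unbounded degree, and the sleep-scheduling idea you describe is hand-wavy where the paper is concrete. The paper's Phase~I for Algorithm~2 (Lemma~\ref{lem:alg2p1}) is a \emph{degree-reduction} stage that, in each of $O(\log\log\Delta)$ iterations, caps the marking probability near $\Delta^{-0.6}$, estimates degrees by sampling with probability $\Delta^{-0.5}$, and ``spoils'' any node that is ever sampled so that each node is marked at most once; this is what makes the wake-up schedule of Lemma~\ref{lem:devAndConq} (a fixed divide-and-conquer / ``virtual binary tree'' schedule, not a Linial-type primitive) applicable, at the cost of only reducing the maximum degree to $\mathrm{poly}(\log n)$ rather than shattering. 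Shattering itself is then the separate Phase~II (Lemma~\ref{lem:alg1p2}) on the bounded-degree residual graph. Your proposal compresses these into a single Luby-style pass without explaining how the degree-dependent marking probabilities can be pre-committed so that the sleep schedule is computable in advance, which is exactly the technical difficulty the paper's spoiling/capping mechanism is designed to overcome.
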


As an additional point, we also show that---with the addition of one suitable algorithmic module---these algorithms can ensure that the average energy spent over all nodes remains $O(1)$, while keeping the time and energy complexities unchanged.
Hence, these augmented algorithms simultaneously match the $O(1)$ average-energy bounds of the algorithms of \cite{ghaffari2022average, chatterjee2020sleeping}; recall that the latter do not provide sublogarithmic (worst-case) energy complexity.
This extension to constant node-averaged energy appears in \Cref{sec:average}.



\section{Algorithm 1}
\label{sec:alg1}
In this section, we present our first algorithm, which has time complexity $O(\log^2 n)$ and energy complexity $O(\log \log n)$.
The algorithm is split into three phases:
The first phase will reduce the maximum degree to $\poly (\log n)$, the second phase will break the graph into small components, also known as \emph{shattering}, and the third phase will be a clean-up that deals with the remaining small components.
Finally, we will put everything together and prove \Cref{thm:alg1}.

\subsection{Phase I}
\label{sec:alg1p1}
In this first phase, our goal is to compute an independent set of nodes such that, after removing these nodes and their neighbors from the graph, the remaining graph has maximum degree at most $O(\log^2 n)$.
This is formalized in the following lemma.
\begin{lemma}
	\label[lemma]{lem:alg1p1}
	There is an algorithm that, given a graph of maximum degree $\Delta$, computes an independent set $S$ of nodes with the following properties: removing all nodes of $S$ together with their neighbors from the graph results in a graph with maximum degree $O(\log^2 n)$.
	Furthermore, the algorithm runs in $O(\log \Delta \cdot \log n)$ rounds of the \congest model, and each node is awake for at most $O(\log \log n)$ rounds.
	The algorithm succeeds with probability at least $1 - n^{-10}$.
\end{lemma}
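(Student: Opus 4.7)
The plan is to build an energy-efficient adaptation of Luby's degree-reduction algorithm. Recall that in Luby's algorithm, each node $v$ marks itself with probability $1/(2 d_v)$, marked nodes join the independent set unless a marked neighbor of at least as high degree blocks them, and neighbors of IS members are removed. A single such round removes a constant fraction of edges in expectation, so $O(\log \Delta)$ rounds---amplified by a $\log n$ factor for concentration and union bound---drive the maximum degree down to $O(\log^2 n)$ with high probability. The obstacle is that vanilla Luby keeps every node awake every round, costing $\Theta(\log \Delta \cdot \log n)$ energy.

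To get energy $O(\log \log n)$, I would organize the $O(\log \Delta \cdot \log n)$ rounds into $L = O(\log \log n)$ super-phases, where super-phase $\ell$ aims to drive the current maximum degree from $D_{\ell-1}$ down to $D_\ell := \max\{D_{\ell-1}^{1/2}, \log^2 n\}$, starting from $D_0 = \Delta$. Since degrees are squared-rooted each super-phase, $L \le \log \log \Delta + O(1) = O(\log \log n)$ super-phases suffice. Super-phase $\ell$ is allocated $T_\ell = \Theta(\log D_{\ell-1} \cdot \log n)$ rounds, enough to simulate $\Theta(\log D_{\ell-1})$ Luby-style iterations each repeated $\Theta(\log n)$ times for amplification; the $T_\ell$'s form a geometric series summing to $O(\log \Delta \cdot \log n)$.

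The energy savings come from a degree-class schedule inside each super-phase: at the start of super-phase $\ell$, every node spends one wake-up round learning its current degree from its still-active neighbors, and a node with degree in $[2^i, 2^{i+1})$ is then scheduled to participate only during the sub-block of $T_\ell$ earmarked for class $i$. The analysis shows that within its active sub-block, a node either joins the IS, loses enough neighbors to drop out of the residual graph, or descends to a strictly smaller degree class with probability $1 - n^{-12}/L$; hence each node is active in at most $O(1)$ degree classes per super-phase, giving $O(L) = O(\log \log n)$ total wake-ups across the algorithm. A union bound over the $n$ nodes and $L$ super-phases then yields success probability at least $1 - n^{-10}$.

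The main obstacle will be making the per-super-phase analysis rigorous under the sparse schedule: I need a potential argument (e.g., tracking $\sum_v \log_2 d_v$ over surviving edges) showing that one Luby step within the class-$i$ sub-block, amplified across $\Theta(\log n)$ repetitions using fresh randomness, drops each awake node's degree by the required constant factor with the desired per-node failure probability $n^{-12}/L$, even though only nodes of the currently-targeted class are awake together. I expect to handle this via a standard Chernoff bound on the number of "killed" neighbors per repetition, combined with a careful choice of marking probabilities so that the residual-degree estimates remain accurate at the transition between sub-blocks, and a final aggregate argument showing $D_\ell$ is achieved when super-phase $\ell$ ends.
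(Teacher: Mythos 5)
Your high-level plan is a Luby-style degree reduction, which is the same family of algorithm the paper uses, but your proposed scheduling mechanism is genuinely different and, as written, it does not achieve the $O(\log\log n)$ energy bound and omits the key technical ingredient the paper needs.

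The central difficulty this lemma addresses is not amplification but \emph{status propagation to sleeping nodes}: when a node $v$ is asleep and one of its neighbors joins the independent set, $v$ must somehow learn it has been removed before $v$ next wakes up to act. Your degree-class schedule never confronts this. During the sub-block earmarked for class $i$, only class-$i$ nodes are awake, so a class-$i$ node that marks itself cannot tell its (sleeping, other-class) neighbors that they are removed, and conversely a node cannot compute its ``current degree from its still-active neighbors'' at the start of a super-phase because those neighbors may be asleep and unable to answer. This is why the paper instead uses a \emph{fixed}, degree-independent marking probability $2^i/(10\Delta)$ in iteration $i$, forces every node to be marked at most once over the whole phase (the ``spoiled'' bookkeeping), and then pre-schedules, from each node's single marking round $r_v$, a set of $O(\log T)$ wake rounds via a divide-and-conquer/binary-tree construction (\Cref{lem:devAndConq}) guaranteeing that any $u,v$ with $r_u\le r_v$ share a wake round between $r_u$ and $r_v$. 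That schedule is what lets a node learn of its removal with only $O(\log\log n)$ awake rounds; nothing in your proposal plays that role.

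There is also an arithmetic gap in your energy accounting. You allocate each degree class a sub-block of $\Theta(\log n)$ rounds so the Luby step can be amplified and concentrated, and you keep a node awake throughout the sub-block it participates in. Even if a node participates in only $O(1)$ classes per super-phase, that is $\Theta(\log n)$ awake rounds per super-phase, hence $\Theta(\log n\cdot\log\log n)$ awake rounds overall, not $O(\log\log n)$. Counting ``wake-ups'' as sub-block entries rather than rounds does not rescue this, since energy complexity is measured in rounds. If instead you keep the node awake for $O(1)$ rounds inside its sub-block, you lose the per-node concentration your analysis relies on. To get $O(\log\log n)$ you need precisely the paper's device: a node acts in a single round (its one marking round, fixable in advance because the probabilities do not depend on the evolving graph), and all its other wake rounds are thin ``listen/relay'' rounds scheduled by the binary tree.

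One further wrinkle: your marking probability $1/(2d_v)$ depends on $v$'s residual degree, which itself depends on what happened while $v$ was asleep, so it cannot be pre-computed; the paper avoids this by making the marking probability depend only on the iteration index and the initial $\Delta$, and by bounding separately (via the invariants $A(i)$, $B(i)$) the number of spoiled and non-spoiled active neighbors so that stopping at $T=\log\Delta-\Theta(\log\log n)$ iterations still leaves residual degree $O(\log^2 n)$. Your potential-function sketch would need to be replaced by something of this flavor to control the residual degree under the ``mark once, then spoiled'' rule.
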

We will first describe the algorithm as if all nodes were always active.
Afterward, we discuss how our modifications allow the algorithm to be executed with low energy complexity.

\paragraph{Regularized Luby}
To achieve this goal, we will use a slowed-down variant of Luby's algorithm \cite{luby86}, sometimes also called regularized Luby.
In its basic form, the algorithm works as follows:
In the first iteration, each node is marked with probability $\frac{1}{10 \Delta}$ per round.
Then, each round consists of two sub-rounds, which each correspond to one round of communication in the network:
In the first sub-round, each marked node $v$ informs all its neighbors that it is marked, and learns if any of its neighbors are marked.
In the second sub-round, if $v$ is marked and has no marked neighbors, $v$ joins the MIS, removes itself from the graph, and informs all its neighbors that they will be removed from the graph.
This is repeated for $c \cdot \log n$ rounds.
We can argue that afterward, no node of degree more than $\frac{\Delta}{2}$ remains with high probability.
More generally, in iteration $i$, nodes are marked with probability $\frac{2^i}{10 \Delta}$ for $c \cdot \log n$ rounds.
We can show that at the end of iteration $i$, the maximum degree drops to $\frac{\Delta}{2^i}$ with high probability.
Finally, after $O(\log \Delta)$ iterations, all nodes that are isolated are added to the MIS.

\paragraph{Modifications} We will modify the algorithm as follows: The first modification is that we let nodes be marked at most once.
That is, after the first time a node is marked, if it did not join the MIS, this node is considered \emph{spoiled} and will remain spoiled until the end of this first phase of the entire algorithm.
The second modification is that we will not perform $\log \Delta$ iterations but only $\log \Delta - 2 \log \log n$ many.
Both these modifications also affect which guarantees we expect from the algorithm:
In the end, we will show that the graph of all nodes that were not removed due to themselves or their neighbor joining the MIS has degree at most $O(\log^{2} n)$.

\paragraph{Analysis}
First, recall that we call a node that was marked but did not subsequently join the MIS \emph{spoiled} for the remaining part of the algorithm.
We call a node \emph{active} if neither itself nor any of its neighbors were added to the MIS (yet), and \emph{inactive} otherwise.
Note that nodes that are spoiled can also become inactive and active nodes can be spoiled.
We will show that the following invariants are maintained w.h.p.:
At the end of iteration $i$, we have that
\begin{itemize}
	\item $A(i)$: for any active node, at most $(i+1) \cdot C \log n$ neighbors of it are active and spoiled, and
	\item $B(i)$: each active node has at most $\frac{\Delta}{2^{(i+1)}}$ neighbors that are active and not spoiled.
\end{itemize}
We will prove these two invariants separately and both by induction.
At the beginning of the algorithm, these invariants are trivially satisfied.
For simplicity, we consider these to be the statements $A(-1)$ and $B(-1)$.
Thus, we focus our attention on the case of $i \geq 0$.
\begin{lemma}
	\label[lemma]{lem:Bi}
	For any $i \geq 0$, we have that $B(i-1)$ implies $B(i)$ with probability $1 - n^{-11}$.
\end{lemma}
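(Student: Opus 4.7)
The plan is to fix an arbitrary node $v$, show it violates $B(i)$ with probability at most $n^{-12}$, and then union bound over the $n$ vertices. A key monotonicity observation drives everything: the set $N_v$ of active non-spoiled neighbors of $v$ can only shrink during iteration $i$ (as long as $v$ itself stays active), because both ``spoiled'' and ``inactive'' are absorbing states and no vertex can enter $N_v$ after round $0$. Combined with $B(i-1)$ giving $|N_v(0)| \leq \Delta/2^i$, the only dangerous scenario is that $v$ remains active and $|N_v(t)| > \Delta/2^{i+1}$ for every round $t = 1, \dots, c\log n$ of iteration $i$; if $|N_v(0)|$ already lies below $\Delta/2^{i+1}$, monotonicity disposes of $v$ immediately.

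The heart of the argument is a per-round lower bound on the probability that $v$ becomes inactive, via a Bonferroni (second moment) calculation. Condition on the history up to the end of round $t-1$ in the bad scenario, and for each $u \in N_v(t-1)$ let $A_u$ be the event that $u$ is marked in round $t$ and has no marked neighbor, so that $u$ joins the MIS and hence $v$ becomes inactive. Since $u$ itself is active, $B(i-1)$ bounds $u$'s active non-spoiled degree by $\Delta/2^i$, and Bernoulli's inequality gives
\[ \Pr[A_u] \;\geq\; p_i(1-p_i)^{\Delta/2^i} \;\geq\; p_i\bigl(1 - p_i \cdot \Delta/2^i\bigr) \;=\; 0.9\, p_i, \]
using the calibration $p_i \cdot \Delta/2^i = 1/10$. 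Summing over at least $\Delta/2^{i+1}$ such $u$ yields $\sum_u \Pr[A_u] > 0.9\,(p_i\Delta/2^{i+1}) = 0.045$. For the pairwise correction, $A_u \cap A_{u'} = \emptyset$ when $u \sim u'$ (each would witness a marked neighbor of the other), and otherwise the two markings are independent, giving $\Pr[A_u \cap A_{u'}] \leq p_i^2$ and
\[ \sum_{u<u'} \Pr[A_u \cap A_{u'}] \;\leq\; \tfrac{1}{2}(\Delta/2^i)^2 p_i^2 \;=\; 0.005. \]
The Bonferroni lower bound then yields $\Pr[\bigcup_u A_u] \geq 0.045 - 0.005 = 0.04$, i.e.\ $v$ becomes inactive in round $t$ with probability at least $0.04$, independently of the history consistent with the bad scenario.

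Finally I would chain these per-round bounds. Letting $\mathcal{B}_t$ denote the event that $v$ is still active and $|N_v(t)| > \Delta/2^{i+1}$ at the end of round $t$, monotonicity gives $\mathcal{B}_t \subseteq \mathcal{B}_{t-1}$ and the preceding calculation gives $\Pr[\mathcal{B}_t \mid \mathcal{B}_{t-1}] \leq 0.96$, hence $\Pr[\mathcal{B}_{c\log n}] \leq 0.96^{c\log n} \leq n^{-12}$ for a sufficiently large absolute constant $c$. A union bound over the $n$ vertices then yields the claimed $1-n^{-11}$ probability. The main obstacle is keeping the Bonferroni correction strictly below the first moment: the paper's choice $p_i = 2^i/(10\Delta)$ is precisely what makes $p_i \Delta/2^i$ small enough ($=1/10$) that the pairwise $\sim (p_i\Delta/2^i)^2$ term is dominated by the linear $\sim p_i\Delta/2^{i+1}$ term by a fixed multiplicative gap, uniformly across all iterations $i$.
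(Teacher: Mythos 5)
Your proof is correct, and it takes a mildly different route than the paper at the one place where a route matters: the per-round constant lower bound on the probability that $v$ is removed. The paper, for each active non-spoiled neighbor $u$ of $v$, considers the event that $u$ is the \emph{unique} marked node in $N(u)\cup N(v)$; because uniqueness is required over $v$'s whole neighborhood as well, these events are pairwise disjoint, so the probability that some $u$ joins the MIS is just the sum of the per-$u$ probabilities, which evaluates to a constant. (The paper actually writes that the events are ``independent,'' which is a slip of the pen --- they are disjoint, not independent --- but this does not affect the bound.) You instead use the weaker event $A_u$ that $u$ is the unique marked node in $N(u)$ alone. These $A_u$ overlap, so you pay for that with a Bonferroni/second-moment correction, which the calibration $p_i\Delta/2^i=1/10$ makes small relative to the first-moment sum. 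Both styles are valid: the paper's choice of event eliminates the second-moment term by construction; yours uses a more generous event but needs the inclusion--exclusion truncation. The rest of your argument --- the monotonicity of the active non-spoiled neighborhood (which also licenses applying $B(i-1)$ throughout iteration $i$), the observation that the bad event $\mathcal{B}_t$ is nested, and the per-round freshness of the marking randomness for the geometric decay --- mirrors the paper's proof and is stated carefully enough to be correct as written.
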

\begin{proof}
	Let us focus on one round in iteration $i$, and one active node $v$ which has more than $\frac{\Delta}{2^{(i+1)}}$ active and non-spoiled neighbors.
	We will show that either (a) this number of neighbors drops below the threshold, or (b) node $v$ becomes inactive with high probability.

	First, note that the number of active and non-spoiled neighbors can only decrease, by nodes becoming spoiled or inactive due to being removed by their neighbors.
	By $B(i-1)$ we know that $v$, and any other active node, has at most $\frac{\Delta}{2^i}$ active and non-spoiled neighbors throughout iteration $i$.
	Thus, for a given neighbor $u$, the probability that $u$ is the only marked neighbor in both its own and $v$'s neighborhood is at least
	\[
		\frac{2^i}{\Delta} \cdot \left( 1 - \frac{2^i}{\Delta} \right)^{\frac{2 \Delta}{2^i}} \geq \frac{2^i}{\Delta} \cdot 4^{- 2/10} \geq \frac{2^i}{20 \Delta}.
	\]
	Since these events are independent for all neighbors of $v$, the probability that at least one of them occurs is at least $1/40$.
	This is also a lower bound on the probability that $v$ is removed: the reason is that if any of its neighbors is the only marked node in its neighborhood, it will join the MIS and remove $v$ from the graph.
	Thus, $v$ is removed with a probability of at least $1/40$.

	Assuming that the number of active and non-spoiled neighbors never drops below $\frac{\Delta}{2^i}$, this is true for each round.
	Since rounds within one iteration are independent, the probability that $v$ is still active and has more than $\frac{\Delta}{2^{(i+1)}}$ neighbors after $c \log n$ rounds is at most $1 - n^{-12}$ for $c$ sufficiently large.
	By a union bound over all nodes, we have that $B(i)$ thus holds with probability $1 - n^{-11}$.
\end{proof}
\begin{lemma}
	For any $i \geq 0$, we have that $A(i-1)$ and $B(i-1)$ imply $A(i)$ with probability $1 - n^{-11}$.
	\label[lemma]{lem:Ai}
\end{lemma}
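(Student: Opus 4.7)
The plan is to decompose the set of active spoiled neighbors of an active node $v$ at the end of iteration $i$ into (a) those that were already active and spoiled at the end of iteration $i-1$, and (b) those that became spoiled for the first time during iteration $i$. Invariant $A(i-1)$ already gives at most $i \cdot C \log n$ neighbors in category (a), so it suffices to show that at most $C \log n$ neighbors of $v$ get marked (and hence become spoiled) during iteration $i$, with probability at least $1 - n^{-12}$. A union bound over all $n$ nodes then yields probability $1 - n^{-11}$.

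The main tool is a Chernoff bound on the number of marking events among $v$'s neighbors in iteration $i$. By $B(i-1)$ and the fact that a non-spoiled neighbor stays non-spoiled until it is either marked or removed (so the set of active, non-spoiled neighbors of $v$ only shrinks during the iteration), at every round of iteration $i$ node $v$ has at most $\Delta/2^i$ active, non-spoiled neighbors. In each of the $c \log n$ rounds, each such neighbor is marked independently with probability $2^i/(10\Delta)$, so the expected number of newly marked neighbors per round is at most $1/10$, and the expected total over the iteration is at most $c \log n / 10$.

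To handle the fact that the pool of eligible neighbors shrinks over time (and depends on prior randomness), I would use a standard stochastic-dominance argument: couple the process with a fictitious one in which each of the (at most) $\Delta/2^i$ original active non-spoiled neighbors is ``tried'' once per round with independent probability $2^i/(10\Delta)$, regardless of whether it has already been marked. The true number of newly spoiled neighbors of $v$ in iteration $i$ is dominated by the number of successes in this fictitious process, which is a sum of $c \log n \cdot \Delta/2^i$ independent Bernoulli trials with total mean at most $c \log n/10$. A Chernoff bound then gives that this sum exceeds $C \log n$ (for $C$ a sufficiently large constant relative to $c$) with probability at most $n^{-12}$, provided $c$ is chosen large enough.

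Combining, conditional on $A(i-1)$ and $B(i-1)$, with probability at least $1 - n^{-12}$ node $v$ accumulates at most $C \log n$ new active spoiled neighbors in iteration $i$, giving $(i+1)\cdot C \log n$ in total; a union bound over nodes concludes $A(i)$ with the required probability. The only subtle step is the independence/domination argument in the Chernoff bound; everything else is bookkeeping via the monotonicity of the ``active non-spoiled'' set and the use of $B(i-1)$ as a uniform upper bound throughout the iteration.
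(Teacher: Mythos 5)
Your proof takes essentially the same route as the paper's: decompose via $A(i-1)$ (at most $i\cdot C\log n$ previously spoiled neighbors), use $B(i-1)$ to bound the pool of active non-spoiled neighbors by $\Delta/2^i$, and apply a Chernoff bound to show that at most $C\log n$ of them get marked (hence spoiled) during iteration~$i$. The only minor difference is that the paper directly declares the indicators $X_u$ (``$u$ is ever marked in iteration $i$'') independent across $u$ — justified because each node's marking round is determined in advance purely by its own coins — while you supply an explicit stochastic-dominance/coupling step to handle the shrinking pool; both are valid, and yours is just a more cautious phrasing of the same argument.
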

\begin{proof}
	We will focus on one node $v$ and union bound over all nodes in the last step.
	Since we know that $A(i-1)$ is true, we just need to show that in iteration $i$, at most $C \log n$ active and non-spoiled neighbors of $v$ become spoiled.
	In fact, we will show that at most as many neighbors are marked, which are the only nodes that can become spoiled.
	From $B(i-1)$, we get that $v$ has at most $\frac{\Delta}{2^i}$ active and non-spoiled neighbors at the beginning of iteration $i$.
	For each of these neighbors $u$, let $X_u$ be the event that $u$ is ever marked in iteration $i$.
	We have that $\Pr[X_u = 1] \leq c \log n \frac{2^i}{\Delta}$.
	Let $X = \sum_{u \in N(v)} X_u$ be the number of neighbors marked in iteration $i$.
	We have that $\E[X] \leq c \log n$.
	Since all $X_u$ are independent, we can use Chernoff's Bound to obtain:
	\begin{align*}
		\Pr[X \geq C \log n]
		 & \leq \Pr \left[ X - \E[X] \geq \frac{(C - c) \log n}{\E[X]} \E[X] \right]                                 \\
		 & \leq \exp \left( - \frac{\frac{(C - c)^2 \log^2 n}{\E[X]^2} \E[X]}{2 + \frac{(C-c) \log n}{\E[X]}}\right)
		  \leq \exp \left( - \frac{(C-c)^2 \log^2 n}{(C + c) \log n} \right) \leq n^{-12}
	\end{align*}
	Where we assumed that $C \geq 100c$ in the last inequality.
	By a union bound over all nodes, $A(i)$ holds with probability $1 - n^{-11}$.
\end{proof}
Having proven both those induction steps, we can put them together to obtain the following:
\begin{lemma}
	\label[lemma]{lem:p1degree}
	After $T = \log \Delta - \log \log n$ iterations, each active node has at most $O(\log n)$ active neighbors that are not spoiled, and each active node has at most $O(\log^2 n)$ active and spoiled neighbors, with probability $1 - n^{-10}$.
\end{lemma}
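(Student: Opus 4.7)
The plan is to combine the two inductive lemmas \Cref{lem:Bi} and \Cref{lem:Ai} via a straightforward induction on $i$, and then instantiate the resulting invariants at the final iteration $i = T-1$.

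First I would set up the induction. The base case is trivial: before any iteration, no node is spoiled and every active node has at most $\Delta$ active non-spoiled neighbors, so $A(-1)$ and $B(-1)$ both hold deterministically. For the inductive step, I would argue that if $A(i-1)$ and $B(i-1)$ hold at the start of iteration $i$, then \Cref{lem:Bi} yields $B(i)$ with probability at least $1 - n^{-11}$, and \Cref{lem:Ai} yields $A(i)$ with probability at least $1 - n^{-11}$. Iterating this over all $T = \log \Delta - \log\log n = O(\log n)$ iterations and applying a union bound over the two invariants and all iterations, the total failure probability is at most $2T \cdot n^{-11} = O(n^{-10}/\log n) \le n^{-10}$. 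Hence both $A(T-1)$ and $B(T-1)$ hold simultaneously with probability at least $1 - n^{-10}$.

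Next I would plug in $i = T-1$ with $T = \log\Delta - \log\log n$. Invariant $B(T-1)$ states that each active node has at most
\[
	\frac{\Delta}{2^{T}} \;=\; \frac{\Delta}{\Delta/\log n} \;=\; \log n
\]
active, non-spoiled neighbors. Invariant $A(T-1)$ states that each active node has at most $T \cdot C \log n = O(\log \Delta \cdot \log n) = O(\log^2 n)$ active, spoiled neighbors, using $\Delta \le n$. These two bounds are precisely the conclusions of the lemma.

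I do not expect a genuine obstacle here, since the two inductive lemmas have already done the work; the only care needed is bookkeeping the failure probabilities and checking that the choice of $T$ makes $\Delta / 2^T$ collapse to $\log n$. The one sanity check worth making is that the union bound is over the $O(\log n)$ iterations actually executed in Phase~I (not over all potential $\log \Delta$ iterations), so that the $n^{-11}$ per-step failure probability comfortably absorbs into the stated $n^{-10}$ bound.
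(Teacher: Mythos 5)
Your proposal is correct and follows essentially the same route as the paper: induct on $i$ using \Cref{lem:Bi} and \Cref{lem:Ai}, union-bound the per-iteration failure probabilities over the $T = O(\log n)$ iterations, and then instantiate the invariants at the final iteration. The only (cosmetic) deviation is an off-by-one in indexing—the paper references $A(T)$ and $B(T)$ while you use $A(T-1)$ and $B(T-1)$—which does not affect the asymptotic conclusions.
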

\begin{proof}
	Let us call an iteration $i$ \emph{successful} if both \cref{lem:Bi} and \cref{lem:Ai} hold.
	This occurs with probability at least $1 - 2 \cdot n^{-11}$.
	By a union bound, the probability that all $T$ iterations are successful is at least $1 - n^{-10}$.
	Thus, with the same probability, we have that $A(T)$ and $B(T)$ hold.
	From $B(T)$ we get that every node has at most $\frac{\Delta}{2^{\log \Delta - \log \log n + 1}} = O(\log n)$ active and non-spoiled neighbors that remain, proving the first part of the statement.
	The second part follows directly from $A(\log \Delta)$, by observing that $T \leq \log n$.
\end{proof}
Thus, our algorithm achieves the desired guarantees for the maximum degree in the remaining graph.
What remains is to show that it can be implemented with low worst-case energy complexity.

\paragraph{Low Energy Complexity}
We now describe how the algorithm can be implemented with just $O(\log \log n)$ worst-case energy complexity.
Observe, that since each node is marked at most once, there is also at most one round where a node can join the independent set.
Further, since all marking probabilities are fixed at the beginning of the algorithm, each node $v$ can find its round $r_v$ in which it is sampled before the algorithm even starts, or observe that it is never sampled.
So we have a setting where each node has (at most) one round $r_v$ assigned to it.
Ideally, we would want $v$ to only be awake in round $r_v$, however, it needs some additional information:
$v$ needs to know if any of its neighbors joined the independent set in a previous round, meaning that $v$ is inactive in $r_v$.

We can do so by first adding a (third) sub-round to each round, which will be used to exchange this information.
Thus, we want for each node a set of rounds $S_v$ in which it should be awake, such that for any neighbor $u$ with $r_u \leq r_v$, there is a round $k \in S_v \cap S_u$ with $r_u \leq k \leq r_v$, in which both $v$ and $u$ are active, and $v$ can thus learn if $u$ joined the independent set in this new sub-round.
In the first two sub-rounds of round $k$, only nodes $w$ with $r_w = k$ are awake; all other nodes are only awake for the last sub-round.
Further, we want each set $S_v$ to be as small as possible, since the maximum size of them will correspond to the worst-case energy complexity of our algorithm.
And finally, instead of assigning a set $S_v$ to each node $v$, we can find a set $S_k$ for each round $k$, and node $v$ will be assigned the set $S_k$ with $r_v = k$.

The problem is now formalized in the following lemma.
Similar results have been proven in \cite{barenboim2021deterministic, augustine2022brief}, and such schedules have also been referred to as a ``virtual binary tree''.
For this work to be self-contained, we reprove the statement differently, with a concise recursion.
\begin{lemma}
	\label[lemma]{lem:devAndConq}
	Let $T$ be the total number of rounds, numbered $1, \dots, T$.
	For each round $k$, we can construct a set of rounds $S_k$ with $\Abs{S_k} = O(\log T)$, such that for any two rounds $i \leq j$ there is a round $l$ with $i \leq l \leq j$ and $l \in S_i \cap S_j$.
\end{lemma}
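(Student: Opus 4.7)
The plan is to define the sets $S_k$ by a divide-and-conquer construction on the interval $\{1,\dots,T\}$, and then verify both the size bound and the intersection property by induction on the recursion depth.

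Concretely, I would define a procedure $\textsc{Build}(a,b)$ that acts on a contiguous range of rounds $\{a,a+1,\dots,b\}$. Let $m=\lfloor (a+b)/2\rfloor$. For every $k$ with $a\le k\le b$, add the round $m$ to $S_k$. Then recursively call $\textsc{Build}(a,m-1)$ and $\textsc{Build}(m+1,b)$, stopping when the range becomes empty. The entire construction is obtained by calling $\textsc{Build}(1,T)$ and starting with $S_k=\emptyset$ for all $k$.

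For the size bound, observe that at each level of the recursion, any single round $k$ lies in at most one subinterval, so it receives at most one new element per level. Since each recursive call halves the length of the interval, the recursion has depth $O(\log T)$, giving $|S_k|=O(\log T)$.

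For the intersection property, I would proceed by induction on the depth of the recursion tree, or equivalently on the size of the range currently under consideration. Fix $i\le j$ and consider the top-level call $\textsc{Build}(1,T)$ with midpoint $m$. There are three cases: if $i\le m\le j$, then $m$ was added to both $S_i$ and $S_j$, so $m\in S_i\cap S_j$ and the witness lies in $[i,j]$ as required. If $j<m$, then both $i$ and $j$ lie in the left subrange $\{1,\dots,m-1\}$, and by the inductive hypothesis applied to the call $\textsc{Build}(1,m-1)$ we obtain a witness $l\in[i,j]$ with $l\in S_i\cap S_j$ (the previously added element $m$ is simply extra and does not interfere). The symmetric case $i>m$ is analogous, using the right subrange. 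The base case of an empty or singleton range is immediate. The main (only) subtlety is keeping the inductive invariant clean, namely that the property ``for all $i\le j$ in the current range, some $l\in[i,j]$ belongs to $S_i\cap S_j$'' is preserved by each recursive call; once this is set up, the case analysis above closes the induction and yields the lemma.
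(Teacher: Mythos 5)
Your construction is correct and is essentially the same divide-and-conquer scheme as the paper's: pick the midpoint of the current range, add it to every $S_k$ in the range, and recurse on the two halves, giving depth $O(\log T)$ and hence $|S_k| = O(\log T)$. The only cosmetic difference is that you exclude the midpoint from both recursive subranges (recursing on $[a,m-1]$ and $[m+1,b]$), whereas the paper keeps it in the left subrange and simply caps the recursion at $\lceil \log T\rceil$ levels; your variant terminates a bit more cleanly but the argument is otherwise identical.
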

\begin{proof}
	We will use a divide-and-conquer approach to construct the sets $S_k$.
	To generalize the description, say our rounds are numbered $L, \dots, H$.
	Initially, we have $L = 1$ and $H = T$.
	Let $M = L + \lfloor (H-L)/2 \rfloor$.
	In the first step, our goal is to allow communication between all rounds $i \leq M$ and $j \geq M + 1$.
	We can do so by having everyone be awake in round $M$, i.e., adding $M$ to each set $S_k$.
	This takes care of all such pairs $i$ and $j$.
	Thus, all pairs of rounds $i \leq j$ that do not satisfy the lemma statement yet must have either $i \leq j \leq M$ or $M + 1 \leq i \leq j$.
	So we can do the same procedure on each of those two remaining parts.
	For the first part we set $L_1 = L$ and $H_1 = M$, and for the second part we set $L_2 = M+1$ and $H_2 = H$.
	This process terminates after $\lceil \log n \rceil$ levels of recursion, and in each level, we add one element to the set $S_k$ for each round $k$.
	Thus, we have $\Abs{S_k} \leq \lceil \log n \rceil$.
\end{proof}

\paragraph{Wrap-Up}
Now we finally have all the necessary parts to prove the main lemma of this subsection.
\begin{proof}[Proof of \Cref{lem:alg1p1}]
	First, we will review how the algorithm is executed in a distributed fashion.
	Initially, each node $v$ performs the sampling and finds a round $r_v$ in which it was sampled, or it does not sample itself in any round.
	In the latter case, the node will stay asleep for the entire algorithm.

	All nodes $v$ that have found a round $r_v$ will use \Cref{lem:devAndConq} with $T = c \log n \cdot (\log \Delta - \log \log n)$ to compute a set $S_{r_v}$ for all rounds they will be awake.
	Let $i \in S_{r_v}$ be a round in which $v$ is awake.
	If $i < r_v$, then $v$ will only be awake in the third sub-round and listen if any of its neighbors already joined the MIS in a previous round.
	For the case $i = r_v$, it will be awake in all three sub-rounds.
	In the first two, it will execute the algorithm together with all other neighbors that were also sampled in this round, and then in the third round, it will inform its neighbors if $v$ was added to the independent set.
	Finally, if $i > r_v$, then $v$ will only be awake in the third sub-round again and inform its neighbors if it joined the independent set in round $r_v$.

	It remains to argue that this algorithm achieves all the claimed guarantees.
	From \Cref{lem:p1degree}, we know that each node has at most $O(\log n) + O(\log^2 n) = O(\log^2 n)$ active neighbors that remain.
	We have $T = c \log n \cdot (\log \Delta - \log \log n) = O(\log^2 n)$ total rounds, and from \Cref{lem:devAndConq} we know that the worst-case energy complexity is $O(\log T) = O(\log \log n)$.
	Finally, the algorithm works in the \congest model since all messages that are exchanged are only about whether or not nodes were marked or included in the independent set, which can be sent using just single-bit messages.
\end{proof}

\subsection{Phase II}
\label{sec:alg1p2}
From the previous phase, we know that we now have the following setting:
A set of nodes that is inactive, meaning either the node or one of its neighbors is in the independent set, and a set of remaining nodes that are still active, which induce a graph with maximum degree $\Delta_2 = O(\log^2 n)$.
The inactive nodes will be ignored from now on (since they already satisfy the condition for an MIS).
We focus on the graph induced by the active nodes.

The goal of this phase is to remove an additional set of nodes that form an independent set, whose removal, together with their neighbors, leaves only connected components of size $\poly (\log n)$.
Further, each of these components only has diameter $O(\log n)$.
This has already been obtained in previous work by Ghaffari~\cite{gmis, ghaffari2019missmall}.
We restate the results we will use.
Note that these results assume that nodes are awake in all rounds.
This will not negatively impact our running time since for graphs of maximum degree $\Delta_2 = O(\log^2 n)$, the time complexity is only $O(\log \log n)$, and we can indeed keep all nodes awake during this part of the algorithm.
\begin{lemma} [Ghaffari~\cite{gmis, ghaffari2019missmall}]
	\label[lemma]{lem:alg1p2}
	There is an algorithm that, given a graph of maximum degree $\Delta = \poly (\log n)$, computes an independent set $S$ of nodes, and a clustering of all nodes that are not in or adjacent to $S$ into clusters of diameter $O(\log \log n)$.
	Each connected component of the graph that remains after the removal of $S$ and all its neighbors contains at most $\poly (\log n)$ nodes and $O(\log n / \log \log n)$ clusters.
	Furthermore, the algorithm runs in $O(\log \Delta)$ rounds of the \congest model and succeeds with probability at least $1 - n^{-10}$.
\end{lemma}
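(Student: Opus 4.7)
The plan is to invoke Ghaffari's MIS algorithm from \cite{gmis, ghaffari2019missmall}, run for $O(\log \Delta)$ rounds, and show that its output naturally yields both the independent set $S$ and the cluster structure. That algorithm is a local randomized marking process in which each node $v$ maintains a desire probability $p_v^{(t)}$ starting at $1/2$ that is either halved or doubled (capped at $1/2$) according to whether its neighborhood is ``over-marked'' or ``under-marked''; in each round $v$ marks itself with that probability and joins $S$ if it is the unique marked node in its neighborhood.

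For the shattering part, I would invoke the central local property proved in \cite{gmis}: in every round each still-active node becomes inactive (either by joining $S$ or by a neighbor joining) with at least a constant probability, and crucially this lower bound holds even conditioned on arbitrary behavior of nodes outside a constant-radius neighborhood of $v$. Running for $C \log \Delta$ rounds, each node is still active at the end with probability at most $\Delta^{-c}$ for an arbitrarily large constant $c$. Combined with a standard counting argument that bounds the number of connected ``bad'' subgraphs of size $k$ in the $O(1)$-th power of $G$ by $n \cdot \Delta^{O(k)}$, taking $c$ sufficiently large yields, with probability $1 - n^{-10}$, that every connected component of bad nodes has at most $\poly(\log n)$ vertices and diameter $O(\log n)$ in $G$, using $\Delta = \poly(\log n)$.

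For the clustering, I would, within the same $O(\log \Delta)$-round budget, apply a Miller--Peng--Xu style randomized low-diameter decomposition inside the active subgraph: each node picks an exponential BFS delay with mean $\Theta(\log\log n)$ and is assigned to the center whose shifted BFS reaches it first. This produces clusters of diameter $O(\log\log n)$ with high probability. The boundary-edge analysis then bounds the expected number of clusters intersecting any remaining component of $\poly(\log n)$ vertices by $O(\log n/\log\log n)$; amplifying by $O(\log\log n)$ independent repetitions and keeping the best trial ensures this bound holds with probability $1 - n^{-10}$, while fitting within $O(\log \Delta)$ rounds since $\Delta = \poly(\log n)$.

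The main obstacle would be the shattering step itself, specifically the local-independence argument that lets one multiply per-node success probabilities across an arbitrary connected subgraph in spite of the dependencies introduced by shared neighbors. This is exactly the content of the analysis in \cite{gmis}, which I would cite rather than reprove, and the adaptation giving the per-component cluster count within the same round budget is the refinement in \cite{ghaffari2019missmall}.
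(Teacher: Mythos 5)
This lemma is not proved in the paper; the authors explicitly restate it as a result of Ghaffari~\cite{gmis,ghaffari2019missmall} and simply invoke it as a black box. Your sketch of the shattering part is in the right spirit and roughly matches what those papers do: the per-node constant progress probability holds conditioned on distant randomness, and a union bound over connected witness subgraphs (counted as $n\cdot\Delta^{O(k)}$ in a bounded power of $G$) shows all surviving components have size $\poly(\log n)$ when $\Delta=\poly(\log n)$.

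The clustering step, however, has a genuine gap. An MPX-style exponential-shift decomposition with parameter $\beta$ gives cluster diameter $O(\log N/\beta)$ and cuts a $\beta$-fraction of edges in expectation; to get diameter $O(\log\log n)$ over a component with $N=\poly(\log n)$ vertices you would need $\beta=\Theta(1)$. But a constant boundary fraction does not bound the \emph{number} of clusters by $O(\log n/\log\log n)$ — in a $\poly(\log n)$-vertex component you could easily end up with $\poly(\log n)$ clusters (e.g.\ many singletons), which is far too many and would break the $O(\log\log n)$-iteration Bor\r{u}vka merge of Phase~III. The bound $O(\log n/\log\log n)$ on the number of clusters is not a low-diameter-decomposition fact at all; in Ghaffari's argument it falls out of the shattering analysis itself. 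Roughly, one shows that any set of pairwise-far-apart surviving vertices inside one component has size $O(\log_\Delta n)=O(\log n/\log\log n)$ (because such vertices survive essentially independently, each with probability $\Delta^{-\Omega(1)}$, and one union-bounds over the $n\Delta^{O(k)}$ ways to pick $k$ such vertices), and the clusters are then grown around a maximal such set so that every surviving node is within $O(\log\log n)$ hops of one of these $O(\log n/\log\log n)$ ``anchors.'' Replacing this with a fresh MPX decomposition and ``keeping the best of $O(\log\log n)$ trials'' does not recover the cluster-count bound, and it would also be unclear how to aggregate and select the best trial within the stated round budget. You should instead derive the clustering directly from the shattering union bound, as in~\cite{gmis,ghaffari2019missmall}.
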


\subsection{Phase III}
\label{sec:alg1p3}
Lastly, we have to deal with the remaining small components.
So we describe what we do for one component, but all components can be handled simultaneously.
Recall from the previous section that each node is part of a cluster of diameter at most $O(\log \log n)$ and each connected component contains at most $O(\log n / \log \log n)$ clusters.
Our goal will be to merge all these clusters into one cluster per component.
If our message size would be unlimited, this would already solve our problem since we could just exchange the topology of the entire component, and have each node run a sequential algorithm to obtain the result.
Since we want our algorithms to work in the \congest model, we will have to use an additional step, also used in \cite{ghaffari2019missmall}, to finally compute a maximal independent step.
That is, we run several randomized algorithms, each with an overall small success probability in parallel, and use the fact that clusters have low diameter to find a successful execution, which exists with high probability.
In particular, we will show the following:
\begin{lemma}
	\label[lemma]{lem:alg1p3}
	There is an algorithm that, given a graph of at most $\poly (\log n)$ nodes clustered into $O(\log n / \log \log n)$ clusters, each with diameter $O(\log \log n)$, finds an independent set, which is maximal with probability $1 - n^{-10}$.
	The algorithm runs in $O(\log n \cdot \log^2 \log n)$ rounds of the \congest model, and each node is awake for at most $O(\log \log n)$ rounds.
\end{lemma}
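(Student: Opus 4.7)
The plan is, separately in each component, to run many independent copies of Luby's algorithm in parallel and then use the component's polylogarithmic size to cheaply identify a copy that produced a valid MIS.

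First I would set up infrastructure. Within each cluster I elect a leader and build a BFS tree rooted at it; because clusters have diameter $O(\log\log n)$ this takes $O(\log\log n)$ time and energy per node. Using inter-cluster edges, cluster leaders then construct a spanning tree of clusters in each component. The cluster graph has $O(\log n/\log\log n)$ super-nodes, and one super-edge can be traversed in $O(\log\log n)$ real rounds (an intra-cluster broadcast followed by a single cross-cluster hop), giving a total construction time of $O(\log n)$. The schedule is arranged so that the BFS wavefront is inside any given cluster only for a single $O(\log\log n)$-round window, so each node contributes only $O(\log\log n)$ to its energy here.

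I then run $K=\Theta(\log n/\log\log n)$ independent copies of Luby's algorithm in parallel for $\Theta(\log\log n)$ rounds on each component. Each copy succeeds with probability $1-1/\poly(\log n)$ on a $\poly(\log n)$-node component, so at least one of the $K$ copies succeeds with probability $1-n^{-10}$. Each copy needs only a constant number of bits per edge per round, so all $K$ copies still fit in $O(\log n)$-bit messages, and the energy cost of this step is $O(\log\log n)$. Each node then locally computes a $K$-bit vector $b_v$ whose $k$-th entry is $1$ exactly when instance $k$ exhibits a local violation at $v$ (a marked neighbor when $v$ is marked, or no marked neighbor when $v$ is unmarked), after which I OR-aggregate these vectors up the component spanning tree, let the root pick an index $k^{*}$ whose aggregate entry is $0$, and broadcast $k^{*}$ back so that every node reads off its MIS status from instance $k^{*}$. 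Each cluster-crossing along this convergecast/broadcast is a pipelined routine on the cluster BFS tree that carries the $K$-bit vector in $O(\log^{2}\log n)$ rounds; summed over the $O(\log n/\log\log n)$ clusters on a root-to-leaf path this lands within the claimed $O(\log n\cdot\log^{2}\log n)$ time bound, while inside any single cluster each node is awake for only $O(\log\log n)$ rounds.

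The main obstacle is exactly this scheduling: even though a node's cluster has small diameter, the aggregation wavefront reaches that cluster at a time that depends on the component's global structure, so every node must know in advance when to wake up. I handle this with a preliminary cluster-tree BFS that propagates only $O(\log\log n)$-bit timing information, informing each node of the precise rounds in which the wavefront is inside its cluster; within that interval each node is scheduled via the divide-and-conquer paradigm of \Cref{lem:devAndConq}, keeping per-cluster energy at $O(\log\log n)$ while preserving correctness.
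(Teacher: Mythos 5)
The overall structure of your proposal matches the paper's: build a spanning tree for each component, run many independent MIS copies in parallel, then use convergecast/broadcast on the tree to select a successful copy. However, there is a serious gap in the step where you construct the component-wide spanning tree with the claimed energy bound.

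Your plan is to grow a BFS over the cluster graph and to schedule each node's wake-up so that it is only awake during the $O(\log\log n)$-round window when the BFS wavefront visits its cluster. But a node can only know that window if it already knows its cluster's BFS distance from the source cluster, which is precisely the quantity the BFS is supposed to discover. Your proposed fix --- a ``preliminary cluster-tree BFS that propagates only $O(\log\log n)$-bit timing information'' --- is circular: that preliminary BFS has the identical scheduling problem, since nodes again cannot know in advance when its wavefront will arrive. Without such a fix, a BFS over a cluster graph of depth $\Theta(\log n/\log\log n)$ costs $\Theta(\log n/\log\log n)$ awake rounds per node, which blows the $O(\log\log n)$ energy budget. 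This is exactly the difficulty that the paper's Lemma~\ref{lem:mergeClusters} is designed to overcome: instead of a BFS it runs $O(\log\log n)$ iterations of a Bor\r{u}vka-style merging (with a deterministic maximal matching on the low-degree cluster graph obtained via Linial coloring), and at every step maintains the invariant that each node knows its exact distance to its current cluster root. That invariant is what lets every broadcast/convergecast be scheduled with $O(1)$ awake rounds per node, yielding a spanning tree of diameter $O(\log n)$ with total energy $O(\log\log n)$. Your proof needs a replacement for the BFS step with comparable guarantees; as written, the energy bound for constructing the tree is unjustified.

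Two smaller remarks. First, Luby's algorithm as usually stated needs degree or random-priority exchange and so does not obviously run with $O(1)$ bits per round; the paper instead uses Ghaffari's MIS algorithm, which is explicitly designed for $1$-bit messages, making the parallel composition of the $K$ copies immediate. Second, you do not need pipelining for the $K$-bit selection vector: with $K=\Theta(\log n/\log\log n)$ (or even $K=\Theta(\log n)$ as in the paper) the whole vector fits in a single $O(\log n)$-bit message, so a plain convergecast on the diameter-$O(\log n)$ tree suffices once that tree is built with the right distance labels.
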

The main ingredient is an algorithm merging all clusters into one cluster per component, which is also associated with a spanning tree of diameter $O(\log n)$.
This will allow us to perform broadcast and convergecast on this spanning tree with $O(1)$ awake time per node in $O(\log n)$ rounds.
The problem of merging clusters in an energy-efficient manner has been studied before, e.g., for minimum spanning tree algorithms.
Prior work~\cite{barenboim2021deterministic, augustine2022brief, dufoulon2022sleeping} obtained various trade-offs between energy and time complexity for both deterministic and randomized algorithms.
All these algorithms follow the outline of Bor\r{u}vka's algorithm \cite{boruuvka1926jistem} for the minimum spanning tree problem, which was first used in the distributed setting by Gallager, Humblet, and Spira~\cite{gallager1983distributed}.

Compared to these previous works, we are interested in a different trade-off between time and energy complexity, and simply changing their parameters is not sufficient for us.
Further, we remark that while there are randomized algorithms achieving our desired (and better) trade-offs, these would not work in our setting.
Since the graphs only have size $\poly (\log n)$, the success probability of standard randomized algorithms is only $1 - 1/ \poly (\log n)$.
Thus, our algorithm will need to be deterministic.
We will first state our desired guarantees, then give a brief overview of how to communicate within clusters in an energy-efficient manner, and finally, present the full algorithm for connecting all clusters in a component.
\begin{lemma}
	\label[lemma]{lem:mergeClusters}
	There is an algorithm that, given a graph of at most $\poly (\log n)$ nodes clustered into $O(\log n / \log \log n)$ clusters, each with diameter $O(\log \log n)$ finds a spanning tree with diameter $O(\log n)$ of the graph, in $O(\log n \cdot \log^2 \log n)$ rounds of the \congest model, of which each node is awake for at most $O(\log \log n)$ rounds, and succeeds with probability at least $1 - n^{-10}$.
\end{lemma}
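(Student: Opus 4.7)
The plan is to run a deterministic Borůvka-style merging protocol on the cluster graph, viewed as a super-graph with $k = O(\log n / \log\log n)$ super-nodes. Starting from singleton super-fragments (one per cluster) equipped with a BFS spanning tree of depth $O(\log\log n)$ and leader chosen as the minimum-ID vertex of the cluster, I would run $O(\log k) = O(\log\log n)$ merging phases. In each phase every current fragment identifies one outgoing inter-fragment edge and pairs of fragments are merged along these edges. After $O(\log\log n)$ phases a single fragment remains, and the union of the chosen inter-fragment edges together with the initial intra-cluster BFS trees is the desired spanning tree.

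Each phase consists of the following subroutines, each carried out along the fragment's current spanning tree of depth $D_i = O(2^i \log\log n)$: (i) the leader broadcasts its fragment-ID to all fragment vertices; (ii) each boundary vertex learns its neighbors' fragment-IDs and convergecasts its candidate outgoing edge (to the smallest-ID neighboring fragment) back to the leader; (iii) the leader picks one outgoing edge and broadcasts the decision; (iv) pairwise merges are resolved with standard Borůvka tie-breaking (orient merges from larger-ID leader toward smaller-ID leader to avoid cycles); (v) the new fragment adopts the smaller of the two old leaders, and keeps the union of the two trees plus the chosen merge edge as its new spanning tree. Since in phase $i$ the fragment-tree depth is at most doubled from phase $i-1$, we have $D_i = O(2^i \log\log n)$, and the final spanning tree has depth $\sum_{i=0}^{O(\log\log n)} O(2^i \log\log n) = O(\log n)$ as required.

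To keep the awake footprint small I would apply \Cref{lem:devAndConq} to the entire timeline of $T = O(\log n \cdot \log^2\log n)$ rounds, producing a global sleep schedule in which every node is awake in only $O(\log T) = O(\log\log n)$ pre-allocated rounds. A pipelined broadcast or convergecast along a length-$L$ tree path takes $O(L)$ pipeline steps, and the divide-and-conquer covering property of \Cref{lem:devAndConq} guarantees that any two nodes at distance $\le L$ along the fragment tree share an awake round inside any $L$-round interval, which is enough to forward each bit along the tree. The time cost of phase $i$ is $O(D_i \cdot \log\log n) = O(2^i \log^2\log n)$ (the extra $\log\log n$ factor absorbs the constantly-many subroutines per phase and the $O(\log D_i)$ pipeline overhead), summing to the claimed $O(\log n \cdot \log^2 \log n)$ total.

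The main obstacle will be coordinating fragment-dependent communication with a single fixed per-node schedule, since the fragment a node belongs to (and hence the tree it must relay on) changes across phases and is not known at the start. I would address this by declaring the universal schedule once up front via \Cref{lem:devAndConq} and arguing that, because each phase's intra-fragment communication is a pipelined broadcast/convergecast along a tree of depth $D_i \le T$, the covering property of the global schedule already contains a feasible schedule for any such tree, so the algorithm can correctly complete all of its $O(\log\log n)$ sequential phases without ever waking a node outside its pre-allocated $O(\log\log n)$ rounds.
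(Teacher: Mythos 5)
Your proposal gets the overall Borůvka-on-clusters shape right, but it has two genuine gaps that the paper's proof is specifically built to close.

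\textbf{Uncontrolled merges.} You state that each fragment selects one outgoing edge to the minimum-ID neighboring fragment and that ``pairs of fragments are merged along these edges.'' But the chosen outgoing edges do \emph{not} form a matching: orienting toward the smaller ID yields a \emph{forest}, and a single fragment can be the target of many others (forming a star) or lie on a long chain. Merging such a tree as a whole can increase the new fragment's spanning-tree depth by far more than a factor of two per phase, so your claim $D_i = O(2^i \log\log n)$ does not follow. The paper addresses exactly this: it separates high-indegree clusters ($\geq 10$ incoming chosen edges) from low-indegree ones, processes the high-indegree stars directly, restricts to the bounded-degree subgraph $H_L$ of low-indegree clusters, $O(1)$-colors $H_L$ via Linial's algorithm, computes a \emph{maximal matching} $M_L$, and merges only along the controlled sets $M$, $E_H$, $M_L$, $R$ so that every merge in a phase is star-shaped. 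It also sidesteps delicate per-phase diameter accounting by always using the crude bound of $O(\log n)$ on cluster diameter (justified since the whole component has only $\poly\log n$ nodes). Without some analogue of this decomposition, your claim that the number of fragments halves each phase while the tree depth merely doubles is unsupported.

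\textbf{Wrong tool for the sleep schedule.} You invoke \Cref{lem:devAndConq} as a universal sleep schedule for the whole timeline and argue that its covering property lets any two nodes at tree distance $\le L$ ``share an awake round inside any $L$-round interval.'' That is not what \Cref{lem:devAndConq} gives you. It produces sets $S_k$ \emph{indexed by rounds}, such that two pre-assigned round numbers $r_u, r_v$ admit a common awake round between them; it says nothing about nodes that have no fixed pre-assigned round, and broadcast/convergecast along a tree requires each node to relay at a time determined by its (dynamically changing, not-known-in-advance) depth in the current fragment tree. The paper avoids this entirely: after each merge it performs a convergecast/broadcast so that every node knows its exact distance to the current root, and for each of the $O(1)$ broadcasts/convergecasts in a phase a node wakes in exactly the $O(1)$ rounds $d_v, d_v+1$ that correspond to its depth, giving $O(1)$ energy per sweep and $O(\log\log n)$ total. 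You would need to replace your global-schedule argument with this kind of depth-aware per-sweep scheduling (plus the bookkeeping to maintain accurate depths under merges) for the energy bound to go through.
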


\paragraph{Energy Efficient Broadcast and Convergeast}
Before giving the algorithm proving \Cref{lem:mergeClusters}, we outline how, given a cluster associated with a rooted spanning tree of diameter $D$ and each node knowing its distance to the root and $D$, we can perform broadcast and convergecast in this cluster with $O(1)$ energy complexity and $O(D)$ time complexity.
We will describe the broadcast procedure; convergecast works similarly.
For a detailed and formal description of these procedures, we refer to \cite{augustine2022brief} or \cite{barenboim2021deterministic}, where the assumed structure of our clusters is called ``Labeled Distance Tree (LDT)'' and ``Distributed Layered Tree (DLT)'' respectively.

Note that in the standard broadcast in a tree, each node only performs an operation in two rounds.
In the first round, it receives the message from its parent, and in the second round, it sends it to its children.
Since we assume that a node $v$ knows its distance to the root $d_v$, these rounds are precisely $d_v$ and $d_v + 1$.
Thus, it suffices for $v$ to be awake in only these two rounds, which it can calculate beforehand, and be sleeping in the remaining rounds.
This means that each node is awake in $O(1)$ rounds, and as for the standard broadcast, the procedure requires $O(\log n)$ rounds.

With these procedures as subroutines, we can now give the algorithm proving \Cref{lem:mergeClusters}.
\begin{proof}[Proof of \Cref{lem:mergeClusters}]
	As a first step, we will ensure that each cluster has the required structure.
	For this, all nodes are awake in each round.
	Each cluster will find a root by electing a leader (i.e., the node of minimum identifier), after which the root will start a broadcast, building a spanning tree.
	Each node will learn its distance from the root.
	For the diameter of a cluster, we will always use $O(\log n)$ as an upper bound.
	Since we have $O(\log n / \log \log n)$ clusters of diameter $O(\log \log n)$, and we only ever merge clusters, this is indeed an upper bound of the diameter of any cluster.

	Then, we perform $O(\log \log n)$ iterations, in each of which we will reduce the number of remaining clusters by a constant factor.
	Some steps will be easier to describe in the following view:
	Let $H$ be the graph obtained by contracting each cluster into one node, and connecting two clusters in $H$ if they contain adjacent nodes.
	We will call this $H$ the \emph{cluster graph}.
	With this, we are ready to describe the steps in each iteration.

	\paragraph{Finding Outgoing Edges}
	As in Bor\r{u}vka's algorithm, each cluster will first select one incident edge as outgoing.
	This will be the edge to the neighboring cluster whose cluster identifier (which is just the identifier of its root) is minimal.
	In the case there are multiple edges to the same cluster, we choose one in an arbitrary but consistent way (e.g., by choosing the edge with the minimum identifier, where the edge ID is the concatenation of the ID of its incident nodes).
	To obtain an energy-efficient algorithm, this can be implemented as follows:
	In the first round, all nodes are awake and exchange the identifier of their cluster with all their neighbors.
	Then, each node chooses the cluster with the minimum identifier, and we start a convergecast in each cluster, always forwarding the neighboring cluster with the minimum identifier.
	Once this has reached the root, it will broadcast to all nodes whose edge was chosen.
	Then, we remove all edges that were not selected as outgoing from $H$.
	If two neighboring clusters chose the same edge as their outgoing edge, we will set this edge aside for now, and ignore the edges in both directions in $H$.
	Let the set of edges be $M$.
	By ignoring these parallel edges, and noting that outgoing edges always point to a cluster with the lower identifier, the remaining graph $H$ cannot contain any cycles and must be a forest of oriented trees.
	This observation concludes this step and we move on to the next one.

	\paragraph{High- and Low-Indegree Clusters}
	We call a cluster high indegree if 10 or more other clusters chose it as their outgoing neighbor.
	Otherwise, we say it has low degree.
	In the same way as before, a cluster can learn its indegree by each node being awake in the first round of this step, and if it is incident to the edge selected by its cluster, it will inform its neighbor about this selection.
	Then, we can perform a convergecast in each cluster, summing up the number of incoming edges.
	Finally, each cluster performs a broadcast such that all nodes in the cluster learn the indegree.
	If a cluster has high indegree, it will remove its outgoing edge from $H$ and \emph{accept} all remaining incoming edges.
	We call this set of accepted and non-removed edges $E_H$.
	This will be one of the sets of edges we use for merging clusters.
	This takes care of all edges incident to high indegree clusters, so from now on we will focus on edges between low-degree clusters.
	Let $H_L$ be the subgraph of $H$ induced by low indegree clusters.
	We are interested in only one property of $H_L$, namely that if we drop the orientations, $H_L$ has degree at most $10$ (at most 9 incoming and 1 outgoing edge).

	\paragraph{Maximal Matching on $H_L$}
	To compute a maximal matching on $H_L$, we will first compute a coloring of $H_L$.
	Before we describe the algorithm, we argue that we can simulate the execution of any \congest model algorithm on $H_L$ in the base graph using only one broadcast, one convergecast, and one additional round:
	First, each root broadcasts the messages it intends to send its neighbor to all nodes in its cluster.
	Then, each node that has an adjacent edge in $H_L$ sends the received message to the corresponding neighboring cluster.
	Finally, by a convergecast in each cluster, the root learns all messages.
	Since $H_L$ has maximum degree 10, there are at most 10 messages sent and received in each cluster, and this still works in the \congest model with $O(\log n)$-bit messages.

	For coloring $H_L$ we can now use Linial's algorithm \cite[Theorem 5.1]{linial92}.
	Given a graph of maximum degree $\Delta$ and a $k$ coloring, this algorithm returns a $O(\Delta^2 \log k)$ coloring in one \congest round.
	Performing two rounds of this will give us a coloring with at most $O(\log \log n)$ colors, where we used the cluster identifier as the initial coloring.
	Then, we go through the color classes one by one and use the given orientation of $H_L$.
	For each color class $i$, all clusters $C$ of color $i$ check whether or not they were matched to a cluster with lower color.
	If not, $C$ matches itself to one of its unmatched incoming neighbors, if such a neighbor exists.
	Let this computed set of edges be $M_L$.

	We argue that $M_L$ is a maximal matching in $H_L$.
	First, it is a matching because no cluster is adjacent to more than one edge.
	In each color class, no two clusters can be matched to the same cluster since each cluster that is not of the current color can only be matched to its out-neighbor, of which there is at most one.
	And since it is a proper coloring, there are no two neighboring clusters of the same color.
	Further, the matching is maximal since if a cluster $C$ is unmatched all its outgoing neighbors must be matched, as otherwise $C$ would have been matched to one of them.
	The same is true for the incoming neighbor since if it was not matched, it would have been matched to $C$.

	\paragraph{Merging Clusters}
	Now we can discuss how and which clusters we merge.
	We will merge along all edges from the sets $M$, $E_H$, and $M_L$ one by one, and finally, along a set of edges $R$ which we describe now:
	Each low-degree cluster $C$, that has no incident edge in $M \cup E_H \cup M_L$, chooses an arbitrary outgoing neighbor that has an incident edge in $M \cup E_H \cup M_L$ and adds this edge to $R$.
	Such a neighbor must exist: $C$ can only have low-degree neighbors by definition, and if one of them was not matched in $M_L$, it would contradict $M_L$'s maximality.
	This set $R$ induces a subgraph of $H$, where each connected component is a star since we connect two disjoint sets of clusters.

	Merging along all those sets, we get that each cluster is merged with at least one other cluster, meaning that the number of clusters decreases by a factor of two in each iteration.
	However, we still need to describe how we perform the merges:
	We show how we perform star-shaped merges, where we have a number of leaf clusters that are merging oriented towards a center cluster, which is the case for $E_H$ and $R$.
	Since both $M$ and $M_L$ are matchings, this is just a special case of star-shaped merges.
	We can merge clusters by just adding the edges connecting them to the spanning tree.
	Then for all edges $(u, v)$, such that $u$ is in the center cluster and $v$ is in a leaf cluster, we set $u$ as the parent of $v$ and set the distance of $v$ to the root $d_v$ to be $d_u + 1$.
	Then, we perform a convergecast in the leaf cluster, allowing us to set the distance to $v$ of all nodes that are between $v$ and the root of the leaf cluster.
    For computing the distance of all the remaining nodes to $v$, we perform one more broadcast from the original root of the leaf cluster, allowing $v$ to become the new root.

	\paragraph{Time and Energy Complexity}
	First, we analyze the time complexity.
	Since we half the number of clusters in each iteration, we have $O(\log \log n)$ iterations.
	In each step, except for computing the matching on $H_L$, we use a constant number of rounds for communication between clusters, plus a constant number of broadcasts and convergecasts.
	For computing the matching on $H_L$, we perform one broadcast and one convergecast, plus one additional round per color class.
	Since the cluster diameter is at most $O(\log n)$, the overall time complexity is $O(\log n \cdot \log^2 \log n)$.

	For the energy complexity, using the same argument as for the time complexity gives an upper bound of $O(\log^2 \log n)$.
	However, despite going through the $O(\log \log n)$ color classes, we still only need $O(1)$ awake rounds.
	Each cluster only has a constant number of neighboring clusters, and only needs to be awake in the rounds where these clusters could be matched to them, to learn whether or not it has been matched in future rounds.
	Thus, iterating through all the $O(\log \log n)$ color classes still only requires $O(1)$ awake rounds per node, and thus also $O(1)$ awake rounds per iteration suffice.
	Overall, we have $O(\log \log n)$ rounds where a node needs to be awake.
\end{proof}

\noindent\textbf{Computing the MIS}
From \Cref{lem:mergeClusters} we know that we can obtain a spanning tree of depth $O(\log n)$ in each cluster, and we know that we can perform broadcast and convergecast in each cluster with $O(1)$ energy complexity.
If we had messages of unlimited size, we could just perform one convergecast for the root to learn the topology of the entire cluster, compute an MIS locally, and then distribute this information to all nodes.
For the \congest model, this does not work, but we can use the same idea as in \cite{ghaffari2019missmall} to get around this issue, which we describe now.
\begin{proof}[Proof of \Cref{lem:alg1p3}]
	We first use the algorithm from \Cref{lem:mergeClusters} clusters to get a spanning tree of diameter $O(\log n)$ of the graph.
	To compute an MIS we will use Ghaffari's algorithm~\cite[Theorem 3.1]{gmis} with all nodes being awake for the entire algorithm.
	This algorithm can be executed using 1-bit messages and gives the following guarantee, restated for graphs with maximum degree $\poly (\log n)$.
	After $O(\log \log n)$ rounds of the algorithm, for a given node $v$, the probability that neither $v$ nor any of its neighbors have joined the MIS is at most $1/\poly (\log n)$.
	Since each component has $\poly (\log n)$ nodes, by a union bound the probability that any of them remains (i.e. has not joined the MIS or one of its neighbors joined the MIS) is at most $1 / \poly (\log n)$ still.
	With one execution of the algorithm using only 1-bit messages, we can run $O(\log n)$ independent executions of the algorithm in parallel, using $O(\log n)$ bit messages.
	Since the success of each of those executions is independent, the probability that none of them computes an MIS is less than $n^{-10}$.
	Each node $v$ can now check if an execution was successful locally.
	It checks if any neighbor $u$ of $v$ joined the MIS.
	Then, $v$ is successful if either $v$ joined the MIS and there is no such neighbor $u$, or $v$ did not join the MIS and there is a neighbor $u$ that did.
	Again, this can be done for each execution of the algorithm with 1-bit messages, thus for all executions in parallel in the \congest model.
	To check if one execution was successful for all nodes a convergecast with 1-bit messages suffices, where the aggregate function is just the logical AND.
	This means we can check all executions in parallel and the root of the spanning tree can pick the first successful one, and inform all nodes to use their output in this execution.
	Since at least one execution was successful with probability $1 - n^{-10}$, the algorithm succeeds with the same probability.
\end{proof}

\subsection{Putting Everything Together}
Having described all three phases, we can now prove our first main result.
\begin{proof}[Proof of \Cref{thm:alg1}]
	First, we execute the algorithm described in \Cref{lem:alg1p1}, and remove the independent set together with its neighbors from the graph.
	Since the degree of the remaining graph has now dropped to $O(\log^2 n)$, we can use the algorithm from \Cref{lem:alg1p2} now.
	This again computes an independent set, which we remove from the graph together with their neighbors.
	The remaining nodes are clustered into $O(\log \log n)$-diameter clusters, and each connected component of the remaining graph contains at most $\poly (\log n)$ nodes, and $O(\log n / \log \log n)$.
	By treating each remaining connected component as an independent graph, we can execute the algorithm from \Cref{lem:alg1p3} on each of them.
	The resulting set is maximal in the original graph, if the first two phases succeed, and the third phase computes an MIS in each cluster.
	By a union bound over all clusters, the latter occurs with probability at least $1 - n^{-9}$.
	Combining this with the success probabilities of the first two phases, the probability that the algorithm computes an MIS is $1 - n^{-8}$.

	The algorithm runs for $O(\log^2 n) + O(\log \log n) + O(\log n \cdot \log^2 \log n) = O(\log^2 n)$ rounds, and each node is awake for $3 \cdot O(\log \log n) = O(\log \log n)$ rounds.
\end{proof}

\section{Algorithm 2}
\label{sec:alg2}
In this section, we will describe an algorithm that has energy complexity $O(\log^2 \log n)$ and a time complexity of $O(\log n \cdot \log \log n \cdot \log^* n)$.
This will prove \Cref{thm:alg2}.
Compared to the previous the main change will be in phase I, which is an entirely different algorithm.
Phase II will be identical, and for Phase III we have the same algorithm, however, we will set the parameters differently to obtain a better time complexity, at the cost of a slightly worse energy complexity.

\subsection{Phase I}
\label{sec:alg2p1}

\newcommand{\deltaLB}{\Omega(\log^{20} n)}

As in the first algorithm, the goal of this phase is to find an independent set, such that removing it and its neighbors from the graph, reduces the maximum degree to $\poly (\log n)$.
The algorithm is made up of different iterations, in each of which we will go from a graph with maximum degree $\Delta$ to a graph with maximum degree $\Delta^{0.7}$, as long as $\Delta = \deltaLB$.
Every such iteration will require $O(\log n)$ rounds and $O(\log \log n)$ energy complexity.

The general idea is similar to a different version of Luby's algorithm:
In each round, a node $v$ which still has $\deg(v)$ \emph{active} neighbors is marked with probability $\frac{1}{2 \deg(v)}$.
For any edge that has both endpoints marked, we remove the marking of the endpoint with the lower degree, breaking ties arbitrarily.
Now each node $v$ that is still marked has no marked neighbor, so we add $v$ to the MIS and remove $v$ and all its neighbors from the graph.
It can be shown that this algorithm computes a maximal independent set in $O(\log n)$ rounds with high probability.

For our goal of achieving low energy complexity, we have an additional obstacle compared to the previous algorithm:
The marking probabilities depend on the current state of the graph, which in turn depends on the execution of the algorithm and the random choices in the previous steps.
Some of the issues of the previous section remain, namely that nodes can be marked $\Theta(\log n)$ times\footnote{Indeed,  e.g. on constant degree graphs this will occur with high probability for at least one node.}.

As before, in order to deal with these issues we also need to weaken the guarantees of the algorithm.
While Luby's algorithm computes an MIS, our algorithm will only be able to find an independent set, such that the remaining graph has maximum degree at most $\Delta^{0.7}$, as long as $\Delta = \deltaLB$.
This allows us to almost ignore all nodes of degree less than $\Delta^{0.6}$.
For the remaining high-degree nodes $v$, we can get an accurate estimate $\widetilde{\deg}(v)$ of the degree $\deg(v)$ by sampling their neighbors with probability $\frac{1}{\Delta^{0.5}}$.
Further, we also cap the marking probability at roughly $\frac{1}{\Delta^{0.6}}$.
This means we can pre-mark nodes with the same probability, and then later adjust it, while ensuring that only a small number of nodes are pre-marked.
By being able to do this at the beginning of the algorithm, the nodes can also set their awake schedule at this time.
To sum up, we now have two ways in which nodes can be sampled, and in both of them, the sampling probability is at most $\frac{1}{\Delta^{0.5}}$.
This allows us to use the same ideas as in the first algorithm, where we only consider a node sampled once and then discard it from the algorithm, regardless of if it was removed from the graph.
This will allow us to get the desired bounds on the energy complexity.

To sum up, we will show that there is an algorithm with the following guarantees:
\begin{lemma}
	\label[lemma]{lem:alg2p1}
	There is an algorithm that, given a graph of maximum degree $\Delta \geq \deltaLB$ computes an independent set $S$ of nodes, such that, with probability $1 - n^{-6}$, the removal of all nodes from $S$ together with their neighbors results in a graph of degree at most $\Delta^{0.7}$.
	The algorithm runs in $O(\log n)$ rounds and each node is awake for at most $O(\log \log n)$ rounds.
\end{lemma}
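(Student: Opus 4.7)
The plan is to cast the regularized-Luby sketch above as a single-use sampling procedure whose schedule is fixed at time $0$, then apply the scheduling machinery of \Cref{lem:devAndConq} to obtain $O(\log\log n)$ energy, and finally verify that the degree-estimation and capped-marking modifications preserve enough of the classical Luby concentration to drive the maximum degree down to $\Delta^{0.7}$ in $O(\log n)$ rounds.

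\textbf{Algorithm.} Fix $T=c\log n$ for a sufficiently large constant $c$. At time $0$ every node $v$ independently picks two round indices: $r^m_v$, the first success in an i.i.d.\ Bernoulli sequence of bias $1/\Delta^{0.6}$ (the ``mark-attempt'' round), and $r^d_v$, the first success of an independent Bernoulli sequence of bias $1/\Delta^{0.5}$ (the ``degree-sample'' round); if no success falls in $[T]$, that role is simply unused. Each round is split into a constant number of sub-rounds implementing: degree estimation (a node $v$ sets $\widetilde{\deg}(v)=\Delta^{0.5}\cdot|\{u\in N(v):u\text{ active and }r^d_u=r\}|$), mark refinement (a pre-marked $v$ retains its mark with probability $\Delta^{0.6}/(2\widetilde{\deg}(v))$, capped at $1$), unmarking of the lower-degree endpoint of any doubly-marked edge, MIS insertion, and notification of neighbors. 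To deliver these notifications with low energy, each node additionally wakes on the $O(\log T)=O(\log\log n)$ ``listening'' rounds dictated by \Cref{lem:devAndConq} applied to its two active rounds $r^m_v,r^d_v$, giving total awake complexity $O(\log\log n)$.

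\textbf{Progress in one round.} Fix a round $r$ and a still-active $v$ with $\deg(v)\ge\Delta^{0.7}$. A Chernoff bound on the binomial $|\{u\in N(v):r^d_u=r,\,u\text{ active}\}|$, whose parameter is $(1\pm o(1))/\Delta^{0.5}$ per neighbor once we condition on ``$u$ has not been used before,'' gives $\widetilde{\deg}(v)=(1\pm o(1))\deg(v)$ with probability $1-n^{-\omega(1)}$, and similarly for every neighbor of $v$. Conditional on these accurate estimates, each active neighbor $u$ of $v$ gets an effective marking probability of $\Theta(1/\deg(u))$, so the standard regularized-Luby computation implies that with constant probability some node in $N[v]$ survives the unmark step and joins the MIS, thereby deactivating $v$. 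Iterating this constant per-round progress over $T=c\log n$ essentially independent rounds drives the failure probability for a fixed $v$ below $n^{-7}$, and a union bound over nodes yields the lemma.

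\textbf{Main obstacle.} The delicate part is justifying that the single-use restriction does not distort the marking distribution enough to spoil the classical Luby analysis. The key observation is that, for any node $u$, the probability of having been ``burned'' (i.e.\ already used as a mark-attempt or as a degree-sample) before round $r$ is at most $T/\Delta^{0.6}=o(1)$, so inside any high-degree neighborhood only a $o(1)$ fraction of nodes have been spent. A routine Chernoff bound over neighborhoods then upgrades this to a high-probability ``survival density'' claim, so that the relevant degree and marking concentrations hold with only a $(1\pm o(1))$ multiplicative loss, and all remaining pieces reduce to standard calculations combined with the scheduling argument of \Cref{lem:devAndConq}.
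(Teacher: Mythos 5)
Your proposal reproduces the paper's algorithmic setup faithfully (single-use tagging at rate $1/\Delta^{0.5}$ for degree estimation, single-use pre-marking at rate $\approx 1/\Delta^{0.6}$ with degree-dependent thinning, the schedule from \Cref{lem:devAndConq}), but the analysis has a genuine gap at its central step.

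You assert that once the degree estimates are accurate, ``with constant probability some node in $N[v]$ survives the unmark step and joins the MIS, thereby deactivating $v$'' for every still-active $v$ of degree $\geq \Delta^{0.7}$. This is false for the degree-dependent marking scheme used here. Because each $u$ is marked with probability $\Theta(\min(1/\Delta^{0.6}, 1/\deg(u)))$, a node $v$ whose neighbors all have degree close to $\Delta$ sees each neighbor marked with probability only $\Theta(1/\Delta)$. With $\deg(v)=\Delta^{0.7}$ such neighbors, the probability that \emph{any} neighbor is even marked is $\Theta(\Delta^{-0.3}) = o(1)$, not constant, and $v$ itself is marked with probability $\Theta(1/\Delta^{0.6})$. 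So there is no per-node, per-round constant progress; the classical analysis of this Luby variant (Alon--Babai--Itai) is an \emph{edge}-counting argument, not a node-counting one, and the paper follows that route. Concretely, the paper defines a node to be \emph{good} if it has degree at least $\Delta^{0.6}$ and more than a third of its neighbors have lower degree, proves that at least half of all edges between high-degree nodes have both endpoints good (\Cref{lem:goodEdges}), shows that good nodes are removed with constant probability (\Cref{lem:goodNode}), and then argues via linearity of expectation and Markov that the number of high-degree edges decays geometrically (\Cref{lem:constantDecrease}, \Cref{cor:noHighDegree}). Your node-based ``constant per-round progress'' claim skips the good/bad decomposition entirely, and the union-bound conclusion ``$n^{-7}$ per node'' does not follow.

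There is a second, more minor omission: you need to control not just the high-degree nodes that remain active, but also the nodes that were \emph{spoiled} (burned through tagging or pre-marking) yet never had a neighbor join the MIS. Your remark that only a $o(1)$ fraction of each neighborhood is spoiled is in the right direction, but the target is a concrete bound of $O(\Delta^{0.6})$ spoiled neighbors per node (the paper's \Cref{lem:notManySpoiled}), which is then added to the $O(\Delta^{0.6})$ remaining active non-spoiled neighbors to get the final $\ll \Delta^{0.7}$ degree bound. As written, your argument establishes neither the right per-round progress measure nor the quantitative spoiled-neighbor bound, so the lemma does not follow.
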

By repeatedly executing this algorithm for $O(\log \log \Delta)$ iterations, each time with a smaller maximum degree $\Delta$, until $\Delta \leq O(\log^{20} n)$, we can conclude:
\begin{corollary}
	\label[corollary]{cor:alg2p1}
	There is an algorithm that, given a graph of maximum degree $\Delta$ computes an independent set $S$ of nodes, such that, with probability $1 - n^{-5}$, the removal of all nodes from $S$ together with their neighbors results in a graph of degree at most $O(\log^{20} n)$.
	The algorithm runs in $O(\log n \cdot \log \log \Delta)$ rounds and each node is awake for at most $O(\log \log n \cdot \log \log \Delta)$ rounds.
\end{corollary}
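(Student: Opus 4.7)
The plan is to iteratively apply \Cref{lem:alg2p1}. Let $\Delta_0 = \Delta$ be the initial maximum degree, and define $\Delta_{i+1} = \Delta_i^{0.7}$. In iteration $i$, we invoke the algorithm of \Cref{lem:alg2p1} on the graph induced by the currently active nodes (those not in the independent set constructed so far and not adjacent to it), which by induction has maximum degree at most $\Delta_i$. As long as $\Delta_i \geq c \log^{20} n$ for the constant hidden in the \deltaLB{} hypothesis of \Cref{lem:alg2p1}, a single invocation produces an independent set whose removal together with its neighbors drops the maximum degree to at most $\Delta_i^{0.7} = \Delta_{i+1}$, with failure probability at most $n^{-6}$. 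We stop after the first iteration $k$ for which $\Delta_k \leq O(\log^{20} n)$, and output the union of all the independent sets found.

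The number of iterations is bounded by solving $\Delta^{0.7^k} \leq c \log^{20} n$, i.e. $0.7^k \log \Delta \leq 20 \log \log n + O(1)$, which gives $k = O(\log \log \Delta)$. Since the constructed sets in different iterations are mutually non-adjacent (each iteration only operates on nodes still active after the previous ones), their union is again an independent set in the original graph.

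For the complexity accounting, each of the $O(\log \log \Delta)$ iterations runs in $O(\log n)$ rounds by \Cref{lem:alg2p1}, giving a total time complexity of $O(\log n \cdot \log \log \Delta)$. The iterations are executed sequentially, so each node's awake rounds accumulate across iterations; since a node is awake in at most $O(\log \log n)$ rounds of a single iteration, the total energy complexity is $O(\log \log n \cdot \log \log \Delta)$. Scheduling is not an obstacle because the number of rounds per iteration is a fixed quantity depending only on $n$, so every node can compute in advance which global round ranges belong to which iteration and invoke the sleep schedule from \Cref{lem:alg2p1} inside each range independently.

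Finally, we bound the overall failure probability by a union bound over the $O(\log \log \Delta) \leq O(\log n)$ iterations: the probability that any single invocation of \Cref{lem:alg2p1} fails is at most $n^{-6}$, so the probability that all iterations succeed is at least $1 - O(\log n) \cdot n^{-6} \geq 1 - n^{-5}$. The only mildly delicate point is verifying that the degree guarantee of \Cref{lem:alg2p1} is preserved as the invariant driving the induction; this is immediate because removing nodes can only decrease degrees, so the active subgraph entering iteration $i+1$ indeed has maximum degree at most $\Delta_{i+1}$ whenever iteration $i$ succeeds.
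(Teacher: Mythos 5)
Your proof is correct and follows exactly the route the paper implicitly takes: the paper states the corollary immediately after \Cref{lem:alg2p1} with only the one-sentence justification ``by repeatedly executing this algorithm for $O(\log\log\Delta)$ iterations, each time with a smaller maximum degree $\Delta$, until $\Delta \leq O(\log^{20} n)$.'' Your iteration count $k = O(\log\log\Delta)$ (from $\Delta^{0.7^k} \leq O(\log^{20} n)$), the union bound over iterations giving failure probability $O(\log\log\Delta)\cdot n^{-6} \leq n^{-5}$, and the complexity accounting all correctly fill in what the paper leaves implicit.
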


\paragraph{Algorithm}
We describe one iteration of the algorithm.
Given a graph of degree $\Delta = \deltaLB$, it returns an independent set such that, w.h.p., the maximum degree in the remaining graph is $\Delta^{0.7}$.
This is then applied recursively until the degree drops below $\deltaLB$ and we move on to the second phase.
For simplicity, we first describe the algorithm as if all nodes are active in all rounds.
We discuss how it can be implemented with low energy complexity later in this section.

At the beginning of each iteration, all nodes are \emph{active}.
As a first step, they check if any of their neighbors joined the MIS in the previous iteration.
If they have such a neighbor, they are removed from the graph and become \emph{inactive}.
Then, they perform two types of sampling:
(A) for every round, a node $v$ flips a coin which comes up head with probability $\frac{1}{\Delta^{0.5}}$.
Once a node $v$ flips heads in round $i$, it is \emph{tagged} in round $i$ and will not participate in the rest of the sampling of type (A).
These nodes will be used to estimate the degree for setting the sampling probability.
For the sampling of type (B) each node $v$ performs the same process again, but the probability of flipping heads is $\frac{1}{2\Delta^{0.6}}$.
In the first round $i$ that $v$ flips heads, it is \emph{pre-marked} in round $i$ and again, will not participate in the sampling of the following rounds.
We use this sampling to compute MIS.
We call a node $v$ \emph{spoiled} in round $j$ if there is a round $i < j$, such that $v$ was sampled of either type (A) or (B) in round $i$.

The algorithm performs $O(\log n)$ of the following rounds, in each of which we remove some nodes, and then proceed on the remaining subgraph.
We split each round $i$ into three sub-rounds, each corresponding to one round of communication in the base graph.
In the first sub-round, all nodes that were tagged (i.e. sampled of type (A)), inform all their neighbors of this tagging.
Then, each node $v$ that was pre-marked keeps track of the number of neighbors that were sampled.
Let $A_v$ be this number of tagged neighbors.
The degree can now be estimated as $\widetilde{\deg}(v) = \Delta^{0.5} \cdot A_v$.
Since the node was pre-marked with probability $\frac{1}{2\Delta^{0.6}}$, to get a sampling with probability $\min \left\{ \frac{1}{2\Delta^{0.6}}, \frac{1}{5 \widetilde{\deg}(v)} \right\}$, we re-sample each node with probability $\min \left\{ 1, \frac{2\Delta^{0.6}}{5\widetilde{\deg}(v)}\right\}$ and call a node \emph{marked} if it was re-sampled.

In the second sub-round, each marked node $v$ informs its neighbors that it was marked, and also sends them its estimated degree $\widetilde{\deg}(v)$.
If a marked node $v$ learns that one of its neighbors $u$ was marked, it removes its marking if $\widetilde{\deg}(v) \leq \widetilde{\deg}(u)$, and keeps its marking if no such neighbor exists.
Finally, each node that kept its marking joins the independent set and informs all its neighbors about this decision in the final sub-round.
The nodes that joined the independent set, as well as their neighbors, are now considered \emph{inactive}.
Inactive nodes and nodes that are spoiled in iterations $j > i$, are removed from the graph, and we proceed to the next round.

Once an iteration has finished, we will show that the active (and possibly spoiled) nodes that have more than $4\Delta^{0.6}$ active and non-spoiled neighbors form an independent set, with high probability.
Thus, all active nodes will communicate with all their neighbors to get their exact number of active and non-spoiled neighbors.
If this number is more than $4 \Delta^{0.6}$, they will inform their neighbors of this fact.
Then, all nodes that have degree more than $4 \Delta^{0.6}$, and no neighbor with degree at least $4 \Delta^{0.6}$ will join the independent set.

\paragraph{Analysis}
Our analysis is close to a standard analysis by Alon, Babai, and Itai~\cite{alon1986fast}, as also outlined in~\cite{motwani1995randomized}.
However, due to our changes in the algorithm, each step in the analysis needs to be adapted.
Since we do not work with exact degrees, we first need to prove a standard result that for nodes of large degree, their degree estimate is close to their true degree.
Then, we will show that there are certain nodes, which we call \emph{good}, which are removed with constant probability in each step.
And finally, we can show at least half of the edges between high-degree nodes are good and are thus also removed with constant probability.
After $O(\log n)$ such iteration, this means that no edges between high-degree nodes remain, and the nodes that we add in the final step indeed form an independent set.

In order to try and avoid the ambiguity between the degree at the beginning and during the execution of the algorithm, we start by introducing the following notation.
\begin{definition}
	The \emph{remaining degree} $\deg_i(v)$ of a node $v$ in round $i$ is the number of active and non-spoiled neighbors, i.e. the number of neighbors that still participate in the algorithm.
\end{definition}
For the analysis, let us first prove that each node has an accurate estimation of its degree.
\begin{lemma}
	\label[lemma]{lem:DegreeEst}
	Let $\Delta \geq \deltaLB$.
	For any node $v$ that has remaining degree $\deg_i(v) \geq \Delta^{0.6}$ in a given round $i$, we have that the estimate $\widetilde{\deg}(v) \in [ 1/2 \deg_i(v), 2 \deg_i(v) ]$ with probability $1 - n^{-10}$.
	Further, for any node $u$ with remaining degree $\deg_i(u) < \Delta^{0.6}$ we have that $\widetilde{\deg}(u) \leq 2 \Delta^{0.6}$ with probability $1 - n^{-10}$ as well.
\end{lemma}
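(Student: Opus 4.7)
The plan is to prove both bounds by applying the multiplicative Chernoff inequality to $A_v$, leveraging the hypothesis $\Delta \geq \deltaLB$ to keep the relevant expectation above $\log^{2} n$. First I would set up the random variable. Fix a node $v$ and a round $i$, and for each of the $\deg_i(v)$ currently active non-spoiled neighbors $u$ let $X_u$ be the indicator that $u$'s type-(A) coin comes up heads in round $i$. By definition, being non-spoiled in round $i$ means that in every earlier round both of $u$'s coins came up tails, so conditional on $u$ being active and non-spoiled entering round $i$, the round-$i$ type-(A) coin is still a fresh Bernoulli($1/\Delta^{0.5}$). The coins of distinct neighbors are independent, hence $A_v = \sum_u X_u$ is a sum of $\deg_i(v)$ independent Bernoulli($1/\Delta^{0.5}$) variables with $\E[\widetilde{\deg}(v)] = \Delta^{0.5}\cdot\E[A_v] = \deg_i(v)$.

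For the first part, if $\deg_i(v) \geq \Delta^{0.6}$ then $\E[A_v] \geq \Delta^{0.1}$; since $\Delta \geq \log^{20}n$ this is at least $\log^{2} n$, so a standard multiplicative Chernoff bound gives
\[
\Pr\!\left[\,|A_v - \E[A_v]| \geq \tfrac{1}{2}\E[A_v]\,\right] \leq 2\exp\!\left(-\E[A_v]/12\right) \leq n^{-10}.
\]
Scaling by $\Delta^{0.5}$ translates this into $\widetilde{\deg}(v)\in[\tfrac{1}{2}\deg_i(v),\,2\deg_i(v)]$ as claimed. For the second part, I would use stochastic domination: when $\deg_i(u) < \Delta^{0.6}$, the variable $A_u$ is dominated by $A'_u \sim \text{Bin}(\Delta^{0.6},\,1/\Delta^{0.5})$ obtained by adjoining phantom neighbors. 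Its mean is exactly $\Delta^{0.1}\geq \log^{2} n$, so a Chernoff upper-tail bound yields
\[
\Pr[A_u \geq 2\Delta^{0.1}] \leq \Pr[A'_u \geq 2\Delta^{0.1}] \leq \exp(-\Delta^{0.1}/3) \leq n^{-10},
\]
and multiplying by $\Delta^{0.5}$ gives the desired $\widetilde{\deg}(u)\leq 2\Delta^{0.6}$.

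The only mildly delicate step is the very first one, namely arguing that conditioning on ``$u$ is active and non-spoiled in round $i$'' does not bias the round-$i$ coin $X_u$. This is clean because ``non-spoiled'' is defined purely in terms of $u$'s own past coins (of both types) having come up tails, so the conditioning only fixes those earlier coins and leaves the round-$i$ type-(A) coin as an independent fresh Bernoulli; independence across distinct neighbors is preserved since each node flips its own coins independently of the others. Once this is in place, the rest is a routine Chernoff calculation, and the polylogarithmic lower bound $\Delta \geq \deltaLB$ is precisely what one needs to drive the failure probability below $n^{-10}$.
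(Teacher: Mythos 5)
Your proof is correct and follows essentially the same approach as the paper: decompose $A_v$ as a sum of independent Bernoulli indicators, apply multiplicative Chernoff, and use $\Delta \geq \Omega(\log^{20} n)$ to push the failure probability below $n^{-10}$. The only cosmetic difference is in the second part, where you use stochastic domination by a $\mathrm{Bin}(\Delta^{0.6}, 1/\Delta^{0.5})$ with phantom neighbors to get a clean mean of exactly $\Delta^{0.1}$, whereas the paper applies Chernoff directly to $A_u$ with $\E[A_u] \leq \Delta^{0.1}$ and carries the inequality through the exponent; both yield the same $\exp(-\Delta^{0.1}/3)$ bound, and your version is slightly tidier.
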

\begin{proof}
	Remember that we estimated the degree of $v$ as $\widetilde{\deg}(v) = \Delta^{0.5} \cdot A_v$, where $A_v$ is the number of tagged neighbors of $v$ in round $i$.
	We can write $A_v = \sum_{u \in N_i(v)} X_u$, where $X_u$ is the indicator variable for $u$ being tagged in round $i$, and we have $\Pr[X_u = 1] = \Delta^{-0.5}$ since $u$ is not spoiled.
	It holds that $\E[A_v] = \Delta^{-0.5} \cdot \deg_i(v)$, so the probability that $\widetilde{\deg}(v) \in [ 1/2 \deg_i(v), 2 \deg_i(v) ]$ is equal to the probability that $A_v \in [ 1/2 \E[A_v], 2 \E[A_v]]$.
	Since all $X_u$ are independent, we can use Chernoff's Bound to bound the inverse probability:
	\begin{align*}
		\Pr [ \Abs{A_v - \E[A_v]} \geq 1/2 \E[A_v] ] \leq 2 \exp(-\E[A_v]/12)
	\end{align*}
	From $\deg_i(v) \geq \Delta^{0.6}$ we have that $\E[A_v] \geq \Delta^{0.1} \geq \log^2 n$ since $\Delta \geq \deltaLB$.
	Thus, the probability is at most $n^{-10}$ and we have proven the first part of the statement.

	For the second part, we can follow the same line of argument, but we are interested in the probability of $A_u \leq 2 \Delta^{0.1}$ which is the same as the probability that $\widetilde{\deg}(u) \leq 2 \Delta^{0.6}$.
	We can again bound $\E[A_u]$ from $\deg_i(u) \leq \Delta^{0.6}$ by $\E[A_u] \leq \Delta^{0.1}$.
	We can use Chernoff's Bound again to bound the inverse probability:
	\begin{align*}
		\Pr \left[ A_u - \E[A_u] \geq \frac{\Delta^{0.1}}{\E[A_u]} \cdot \E[A_u]\right]
		 & \leq \exp \left(- \frac{ \frac{\Delta^{0.2}}{\E[A_u]^2} \E[A_u]}{2 + \frac{\Delta^{0.1}}{\E[A_u]}}  \right) \\
		 & = \exp \left( - \frac{\Delta^{0.2}}{2 \E[A_u] + \Delta^{0.1}} \right)                                       \\
		 & \leq \exp(\Delta^{0.1} / 3)
	\end{align*}
	Again, we have that $\Delta \geq \deltaLB$, so this probability is at most $n^{-10}$ proving the second part.
\end{proof}
This now allows us to define what kind of nodes we call \emph{good}, which have their name because they become inactive with constant probability in each iteration.
We note here that spoiled nodes can also be good, even if the naming could suggest otherwise.
This is necessary because once a node is marked it can become spoiled with constant probability.
So in the likely event that there is a high-degree node that becomes spoiled, it still needs to be removed from the graph, or at least its degree among active nodes needs to drop.
Thus, we still need to consider it in our analysis.
\begin{definition}
	We call a node \emph{good} in round $i$, if it is not inactive, has degree $\deg_i(v) \geq \Delta^{0.6}$ and more than a third of its neighbors $u$ have lower degree, i.e. have $\deg_i(u) < \deg_i(v)$.
	Otherwise, we call $v$ \emph{bad}.
	An edge $e$ is good if both its endpoints are good and bad otherwise.
\end{definition}
Unfortunately, we cannot make any statements about the number of good nodes compared to the number of bad nodes.
However, when looking at the edges, we can say that at least half the edges between high-degree nodes are good.
\begin{lemma}
	\label[lemma]{lem:goodEdges}
	At least half of all edges where both endpoints have degree at least $\Delta^{0.6}$ are good.
\end{lemma}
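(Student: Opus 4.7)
The plan is to run the Alon--Babai--Itai counting argument inside the subgraph $G[H]$ induced by the high-degree vertices. Orient every edge of $E_H$ from its lower-degree endpoint to its higher-degree endpoint, breaking ties by node ID, so that each $v\in H$ has well-defined in-degree $d^-_H(v)$ and out-degree $d^+_H(v)$ summing to $d_H(v)$.

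First I would record a simple observation that makes the orientation behave well at the boundary of $H$: any out-neighbor $u$ of $v\in H$ satisfies $\deg_i(u)\ge \deg_i(v)\ge \Delta^{0.6}$, so $u\in H$ too; hence $d^+_H(v)=d^+(v)$ for every $v\in H$ (no out-edge of $v$ escapes $H$). Next I would unfold the definition of ``bad'': a bad $v\in H$ has $d^-(v)\le \deg_i(v)/3$, so $d^+_H(v)=d^+(v)\ge \frac{2}{3}\deg_i(v)\ge \frac{2}{3} d_H(v)$, equivalently $d^-_H(v)\le \frac{1}{3} d_H(v)$.

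The final step is a short count. Let $X$, $Y$, $Z$ be the number of edges of $E_H$ with two, one, and zero bad endpoints respectively, and split $Y=Y_1+Y_2$ according to whether the bad endpoint of a mixed edge is the tail or the head of the orientation. Summing $d^-_H(v)\le d_H(v)/3$ over bad $v\in H$ gives $X+Y_2\le \frac{1}{3}(2X+Y)$, which rearranges to $Y_1\ge X+2Y_2$, and in particular $Y\ge X$. Combined with $|E_H|=X+Y+Z$, this shows that the edges with a good endpoint, namely $Y+Z$, account for at least $|E_H|/2$, which is the content of the lemma.

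I expect the main obstacle to be exactly this bookkeeping step that converts the vertex-level bound $d^-_H(v)\le d_H(v)/3$ into the edge-count inequality, because one has to track carefully which endpoint of a ``mixed'' edge is the bad one and how it sits in the orientation. Everything else --- fixing the orientation, observing that out-neighbors of $v\in H$ remain in $H$, and unfolding the definition of ``bad'' into an in-degree bound --- is routine.
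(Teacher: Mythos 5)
Your proposal is correct and is exactly the Alon--Babai--Itai orientation argument that the paper uses; your $X, Y_1, Y_2, Z$ bookkeeping is simply a more careful formalization of the paper's informal ``dollar'' charging scheme (move a dollar from each bad edge to its bad head, then spread it over the bad node's outgoing edges, each of which gets at most $1/2$ because $d^-_H(v)\le d_H(v)/3$ while $d^+_H(v)\ge 2d_H(v)/3$). Your intermediate observation that out-neighbors of $v\in H$ remain in $H$, and hence $d^+_H(v)=d^+(v)$, is exactly the paper's remark that ``all outgoing edges of nodes of degree at least $\Delta^{0.6}$ are preserved in this subgraph.''

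One thing worth flagging, though: you conclude that ``the edges with a good endpoint, namely $Y+Z$, account for at least $|E_H|/2$,'' and call this the content of the lemma. That is indeed the correct Alon--Babai--Itai conclusion, and it is what the downstream application (\Cref{lem:constantDecrease}, via \Cref{lem:goodNode}) actually needs. However, the paper's Definition literally declares an edge \emph{good} only if \emph{both} endpoints are good, so the lemma as written would assert $Z\ge |E_H|/2$ --- a stronger statement that is in fact false (in a $\Delta$-regular graph no node has a strictly-lower-degree neighbor, so no edge would be good, and even with ID tie-breaking a balanced complete bipartite graph gives only about $4/9$ of the edges with both endpoints good). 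You silently adopted the correct ``at least one good endpoint'' reading, which is what the paper's own proof and application implicitly use; it would be worth making that re-interpretation explicit rather than leaving it to the reader to notice the definitional slip.
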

\begin{proof}
	First, for the analysis, orient all edges from the endpoint with the lower degree to the endpoint with the higher degree, breaking ties arbitrarily.
	Let us look at the subgraph induced by all nodes of degree at least $\Delta^{0.6}$.
	Note that all outgoing edges of nodes of degree at least $\Delta^{0.6}$ are preserved in this subgraph, as they point to nodes of even higher degree.

	Let us place 1 dollar on each bad edge incident to a node of degree at least $\Delta^{0.6}$, even if their other endpoint might have a lower degree.
	For each bad node, we have that at least 2/3 of its incident edges are outgoing and thus are in the relevant subgraph.
	So we can move the dollar on each bad edge, to its endpoint, which can then distribute it such that each of its outgoing edges receives 1/2 dollar.
	Thus, each edge in our subgraph receives 1/2 dollar, meaning at most half the edges can be bad.
	In other words, this means that at least half the edges are good, proving our statement.
\end{proof}
As the main technical part of this proof, we now show that good nodes are indeed removed with constant probability, as long as all degree estimates are accurate.
\begin{lemma}
	\label[lemma]{lem:goodNode}
	Assume that the statement of \Cref{lem:DegreeEst} holds (with probability 1) and that $\Delta \geq \deltaLB$.
	Then, for every good node $v$ with degree $\deg_i(v) \geq \Delta^{0.6}$, the probability that a neighbor of $v$ is marked and not unmarked is at least $(1 - e^{-1/30})/2$.
\end{lemma}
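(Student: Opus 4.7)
The plan is to identify a set of lower-degree neighbors of $v$ where markings have a chance to survive, show that at least one of them is marked with probability at least $1-e^{-1/30}$, and then argue that the marked lower-degree neighbor of maximum estimated degree survives the tie-breaking step with probability at least $1/2$. The product gives the claimed bound.

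Let $L(v) := \{u \in N_i(v) : \deg_i(u) < \deg_i(v)\}$; by the goodness of $v$, $|L(v)| > \deg_i(v)/3$. I will condition throughout on the outcomes of all type-(A) coin flips. Under the hypothesis, this makes every $\widetilde{\deg}(\cdot)$ deterministic and in the good range of \Cref{lem:DegreeEst}, while leaving the pre-marking and re-sampling coins independent across nodes. For any $u \in L(v)$, the two regimes $\deg_i(u) \geq \Delta^{0.6}$ and $\deg_i(u) < \Delta^{0.6}$ together with $\deg_i(v) \geq \Delta^{0.6}$ both give $\widetilde{\deg}(u) \leq 2\deg_i(v)$ via \Cref{lem:DegreeEst}, so the marking probability satisfies
\[
p_u \;=\; \min\!\left\{\tfrac{1}{2\Delta^{0.6}},\,\tfrac{1}{5\widetilde{\deg}(u)}\right\} \;\geq\; \tfrac{1}{10\deg_i(v)}.
\]
Independence of markings across $L(v)$ then yields
\[
\Pr[\text{no } u \in L(v) \text{ is marked}] \;\leq\; \left(1 - \tfrac{1}{10\deg_i(v)}\right)^{|L(v)|} \;\leq\; e^{-1/30}.
\]

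On the event that some $u \in L(v)$ is marked, let $u^*$ be the marked node in $L(v)$ of largest $\widetilde{\deg}$, ties broken arbitrarily. By maximality, any marked $w \in N(u^*)$ with $\widetilde{\deg}(w) \geq \widetilde{\deg}(u^*)$ must lie in $N(u^*) \setminus L(v)$. Since the identity of $u^*$ is determined by the marking coins inside $L(v)$ while the threatening markings involve only coins outside $L(v)$, these are independent families after our conditioning. I will then bound
\[
\sum_{w \in N(u^*):\,\widetilde{\deg}(w) \geq \widetilde{\deg}(u^*)} p_w \;\leq\; \tfrac{1}{2},
\]
splitting on $\deg_i(u^*)$: when $\deg_i(u^*) \geq \Delta^{0.6}$, each $p_w \leq 1/(5\widetilde{\deg}(u^*)) \leq 2/(5\deg_i(u^*))$ with at most $\deg_i(u^*)$ summands; when $\deg_i(u^*) < \Delta^{0.6}$, each $p_w \leq 1/(2\Delta^{0.6})$ with strictly fewer than $\Delta^{0.6}$ summands. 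The Weierstrass product inequality $\prod(1-p_w) \geq 1 - \sum p_w$ then gives that $u^*$ is not unmarked with conditional probability at least $1/2$. Combining with the previous step yields the claimed bound $(1-e^{-1/30})/2$.

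The main subtlety is the independence used in the second step. It crucially relies on first conditioning away the type-(A) randomness so that $\widetilde{\deg}(\cdot)$ is a constant; then the identity of $u^*$ factors through only the marking coins of $L(v)$, while the unmarking threats factor through only the marking coins of $N(u^*) \setminus L(v)$, two disjoint and hence independent families. A secondary point that needs care is the low-degree case $\deg_i(u^*) < \Delta^{0.6}$, where $\widetilde{\deg}(u^*)$ can collapse to a very small value and the relative bound $1/(5\widetilde{\deg}(u^*))$ becomes useless; there one must fall back on the hard cap $1/(2\Delta^{0.6})$ together with the strict bound on the number of neighbors to close the calculation.
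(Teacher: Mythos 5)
You take a genuinely different route from the paper. The paper uses a sequential-exposure argument: it reveals the marking coins of $v$'s lower-degree neighbors one at a time until the first marked node $u$ is found, and then applies a union bound over $u$'s not-yet-exposed neighbors. You instead select $u^*$ as the marked node in $L(v)$ of largest $\widetilde{\deg}$. The idea is reasonable, but there is a genuine gap.

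The gap is in the maximality step. You assert that ``any marked $w \in N(u^*)$ with $\widetilde{\deg}(w) \geq \widetilde{\deg}(u^*)$ must lie in $N(u^*) \setminus L(v)$.'' This is false. The unmarking rule is: $u^*$ drops its mark if some marked neighbor $w$ has $\widetilde{\deg}(u^*) \leq \widetilde{\deg}(w)$, i.e., the threat condition is $\widetilde{\deg}(w) \geq \widetilde{\deg}(u^*)$, with non-strict inequality. But maximality of $u^*$ only guarantees $\widetilde{\deg}(w) \leq \widetilde{\deg}(u^*)$ for marked $w \in L(v)$, so a marked $w \in L(v) \cap N(u^*)$ with $\widetilde{\deg}(w) = \widetilde{\deg}(u^*)$ is a genuine threat that maximality does not rule out. (Such ties are entirely possible, since $\widetilde{\deg}$ is $\Delta^{0.5}$ times an integer count.) Your independence argument---``the identity of $u^*$ factors through only the marking coins of $L(v)$, while the unmarking threats factor through only the marking coins of $N(u^*) \setminus L(v)$''---leans on exactly the false inclusion, so as written the product step is not justified.

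The hole can be patched. If the tie-break for $u^*$ is deterministic (say, smallest identifier), then conditioned on the identity of $u^*$, every surviving threat in $L(v)$ (necessarily a tie with larger identifier) is left unconstrained and independent of the conditioning, and every dominated or smaller-identifier tie node in $L(v)$ is forced unmarked; the union bound $\sum_w p_w \leq 1/2$ over \emph{all} threatening neighbors, including the in-$L(v)$ ties, then still gives the conclusion. Even so, the paper's route is cleaner precisely because the sequential exposure sidesteps the tie case entirely: after revealing $v$'s lower-degree neighbors one at a time until the first marked node $u$ appears, the markings of all of $u$'s neighbors that have not yet been revealed are unconstrained, and the union bound applies directly without ever invoking a maximality argument that could be spoiled by ties.
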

\begin{proof}
	First, we will show that the probability that no neighbor with a lower degree is marked is at most $e^{-1/30}$.
	Since $v$ is good, it has at least $1/3 \deg_i(v)$ neighbors with a lower degree.
	By \Cref{lem:DegreeEst}, each of those neighbors $u$ estimates their degree to be at most $\widetilde{\deg}(u) \leq 2 \deg_i(v)$.
	Combining this with $\deg_i(v) \geq \Delta^{0.6}$, we get that the marking probability of $u$ is
	\begin{align*}
		\min \left\{  \frac{1}{2\Delta^{0.6}}, \frac{1}{5 \widetilde{\deg}(u)} \right\} \geq \min \left\{  \frac{1}{2\Delta^{0.6}}, \frac{1}{10 \deg_i(v)} \right\} = \frac{1}{10 \deg_i(v)}.
	\end{align*}
	Thus, the probability that a lower degree neighbor is not marked is at most $1 - \frac{1}{10 \deg_i(v)}$.
	Since all neighbors are marked independently, the probability that none of them is marked is at most
	\begin{align*}
		\left( 1 - \frac{1}{10 \deg_i(v)}\right)^{1/3 \deg_i(v)} \leq \exp \left( - \frac{1/3 \deg_i(v)}{10 \deg_i(v)} \right) = e^{- 1/30}.
	\end{align*}

	Ideally, we could now just argue that a node $u$ which is marked is removed with constant probability, which is true, however combining these two statements results in issues with dependencies.
	Instead, we have to be a bit more careful about how the random choices are made.
	Let us look at the neighbors of $v$ one by one, and expose their random choices for the marking until we reach one node that is marked.
	By the previous paragraph, this process succeeds with probability $1 - e^{-30}$ before it runs out of neighbors.
	Now that we have such a marked neighbor $u$, we need to argue that it remains with probability at least 1/2.
	We have not yet revealed the random choices of any neighbors of $u$ that could be marked.
	If $u$ shares any neighbors with $v$, they were all not marked.

	There are at most $\deg_i(u)$ neighbors $w$ of $u$ remaining, which  could still be marked and have $\widetilde{\deg}(w) \geq \widetilde{\deg}(u)$.
	Let us first consider the case that $\deg_i(u) \geq \Delta^{0.6}$.
	In that case, by \Cref{lem:DegreeEst}, we have $\widetilde{\deg}(u) \geq 1/2 \deg(u)$.
	Thus, the probability that a given neighbor is marked is at most
	\begin{align*}
		\min \left\{ \frac{1}{2\Delta^{0.6}}, \frac{1}{5 \widetilde{\deg}(w)} \right\}
		\leq \min \left\{  \frac{1}{2\Delta^{0.6}}, \frac{1}{5 \widetilde{\deg}(u)} \right\}
		\leq \min \left\{  \frac{1}{2\Delta^{0.6}}, \frac{1}{2 \deg_i(u)} \right\}
		= \frac{1}{2 \deg_i(u)}.
	\end{align*}
	By a union bound over all neighbors of $u$ with a higher estimated degree, the probability that at least one of them is marked is at most $\deg_i(u) \cdot \frac{1}{2 \deg_i(u)} = 1/2$.
	This means that the probability that none of them is marked is at least $1/2$.

	For the second case that $\deg_i(u) \leq \Delta^{0.6}$ we have that each neighbor $w$ of $u$ is sampled with probability at most $\frac{1}{2 \Delta^{0.6}}$.
	Thus, the probability that any of them is sampled is at most $\Delta^{0.6} \cdot \frac{1}{2\Delta^{0.6}}$, which is at most 1/2.
	Finally, this means that in either case no higher degree neighbor of $u$ is marked with probability at least $1/2$ and thus $u$ is not unmarked.
\end{proof}
This directly means that every good edge is also removed with constant probability by at least one of its endpoints.
By linearity of expectation, this means that the expected number of edges drops by a constant factor.
However, this still assumes that the degree estimates are accurate, which might not be true.
But we have shown that this occurs with such a small probability that we can essentially ignore this case since its contribution to the expectation is negligible.
This intuition is formally proven in the following lemma.
\begin{lemma}
	\label[lemma]{lem:constantDecrease}
	For $\Delta \geq \deltaLB$, let $m_{i-1}$ be the number of edges whose endpoints both have degree at least $\Delta^{0.6}$ at the beginning of round $i$.
	Then, we have that at the end of round $i$, the expected number of such edges is $\E[m_i \mid m_{i-1}] \leq (1 - (1 - e^{-1/30})/8) \cdot m_{i-1}$.
\end{lemma}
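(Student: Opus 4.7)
The plan is to combine the preceding three lemmas via linearity of expectation. First I would condition on the event $\mathcal{E}$ that every remaining node $v$ has a degree estimate $\widetilde{\deg}(v)$ satisfying the conclusion of \Cref{lem:DegreeEst} in round $i$; a union bound over the at-most-$n$ relevant nodes gives $\Pr[\mathcal{E}] \geq 1 - n^{-9}$.

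Next I would partition the $m_{i-1}$ counted edges into the good ones $E_G$ (both endpoints are good nodes) and the bad ones $E_B$. By \Cref{lem:goodEdges}, $|E_G| \geq m_{i-1}/2$. For an arbitrary good edge $e = (u,v) \in E_G$, the endpoint $u$ is a good node with remaining degree at least $\Delta^{0.6}$, so under $\mathcal{E}$ \Cref{lem:goodNode} yields $\Pr[u \text{ becomes inactive} \mid \mathcal{E}] \geq (1 - e^{-1/30})/2$. Since an inactive endpoint is removed from the graph, the edge drops out of $m_i$, so $\Pr[e \notin m_i \mid \mathcal{E}] \geq (1 - e^{-1/30})/2$ for every $e \in E_G$. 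Summing over all good edges by linearity of expectation then gives
\[
\E[m_i \mid \mathcal{E}, m_{i-1}] \;\leq\; |E_B| + |E_G|\left(1 - \tfrac{1-e^{-1/30}}{2}\right) \;\leq\; m_{i-1}\left(1 - \tfrac{1-e^{-1/30}}{4}\right).
\]

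Finally I would lift to the unconditional bound. Since edges only disappear during a round, $m_i \leq m_{i-1}$ deterministically, so the contribution from $\neg\mathcal{E}$ to the expectation is at most $\Pr[\neg\mathcal{E}] \cdot m_{i-1} \leq n^{-9} m_{i-1}$, which is absorbed comfortably in the factor-of-two slack between my conditional bound with coefficient $(1-e^{-1/30})/4$ and the claimed $(1-e^{-1/30})/8$.

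The only real subtlety is that \Cref{lem:goodNode} is stated under the assumption that the accuracy conclusion of \Cref{lem:DegreeEst} holds with probability $1$, while in reality the accuracy events for different nodes are correlated through shared tagged neighbors. The proposal side-steps this by conditioning on the single global event $\mathcal{E}$ and invoking only linearity of expectation---no independence across edges is required---so the rare but correlated failure contributes at most a lower-order additive $n^{-9} m_{i-1}$ term that the weaker coefficient in the lemma statement accommodates.
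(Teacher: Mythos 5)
Your proposal is correct and mirrors the paper's own argument: both condition on the global accuracy event $\mathcal{E}$, invoke \Cref{lem:goodEdges} to get $|E_G|\geq m_{i-1}/2$, apply \Cref{lem:goodNode} under $\mathcal{E}$ to each good edge, and use linearity of expectation plus a union-bound tail term for $\neg\mathcal{E}$. The only cosmetic difference is that you bound the $\neg\mathcal{E}$ contribution directly via $m_i \leq m_{i-1}$, which is slightly tidier than the paper's detour through $m_{i-1}\leq n^2$.
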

\begin{proof}
	Let $\mathcal{E}$ be the event that the statement of \Cref{lem:DegreeEst} holds for all nodes.
	Since the statement of \Cref{lem:DegreeEst} holds with probability at least $1 - n^{-10}$ for each node $v$, by a union bound over all nodes, we have $\Pr[\mathcal{E}] \geq 1 - n^{-9}$ and $\Pr[\overline{\mathcal{E}}] \leq n^{-9}$.
	We can rewrite the expectation we are interested in as:
	\begin{align*}
		\E[m_i \mid m_{i-1}]
		= \E[m_i \mid m_{i-1} \text{ and } \mathcal{E}] \cdot \Pr[\mathcal{E}] + \E[m_i \mid m_{i-1} \text{ and } \overline{\mathcal{ E}}] \cdot \Pr[\overline{\mathcal{E}}]
	\end{align*}
	For the first term, we know from \Cref{lem:goodEdges} that at least $m_{i-1} / 2$ of all edges whose endpoints both have degree at least $\Delta^{0.6}$ are good.
	For those edges, we know that both their endpoints are good and have degree at least $\Delta^{0.6}$.
	Thus, by \Cref{lem:goodNode} each endpoint is removed with probability at least $(1 - e^{-1/30})/2$, and the same is true for the removal of each good edge.
	This means that for the expected value we have
	\begin{align*}
		\E[ m_i \mid m_{i-1} \text{ and } \mathcal{E} ] \leq m_{i-1} - m_{i-1}/2 \cdot (1 - e^{-1/30})/2 = (1 - (1 - e^{-1/30})/4) \cdot m_{i-1}.
	\end{align*}
	For $\Pr[\mathcal{E}]$, we will just need that it is at most 1.

	For the second term we trivially have that $\E[m_i \mid m_{i-1}] \leq m_{i-1}$, and $m_{i-1} \leq n^2$.
	Thus, we have $\E[m_i \mid m_{i-1} \text{ and } \overline{\mathcal{ E}}] \cdot \Pr[\overline{\mathcal{E}}] \ll (1 - e^{-1/30})/8 \cdot m_{i-1}$.

	Putting everything together, we have
	\begin{align*}
		\E[m_i \mid m_{i-1}]
		\leq (1 - (1 - e^{-1/30})/4) \cdot m_{i-1} + (1 - e^{-1/30})/8 \cdot m_{i-1} = (1 - (1 - e^{-1/30})/8) \cdot m_{i-1}.
	\end{align*}
\end{proof}
In words, \Cref{lem:constantDecrease} just says that in each round, the expected number of edges between nodes of degree at least $\Delta^{0.6}$ decreases by a constant factor.
Thus, for a large enough constant $C$, after $C \log n$ rounds, the expected number of such edges that remain is at most $n^{-10}$.
By Markov's inequality, the probability that this is at least one edge is at most $n^{-10}$.
Thus we can conclude the following:
\begin{corollary}
	\label[corollary]{cor:noHighDegree}
	For $\Delta \geq \deltaLB$, after $O(\log n)$ rounds, no edges between nodes of degree $\Delta^{0.6}$ remain with probability at least $1 - n^{-10}$.
\end{corollary}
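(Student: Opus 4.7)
The plan is to chain the per-round contraction from \Cref{lem:constantDecrease} over $O(\log n)$ rounds and then apply Markov's inequality. Write $\alpha = (1 - e^{-1/30})/8$, so \Cref{lem:constantDecrease} yields $\E[m_i \mid m_{i-1}] \leq (1-\alpha)\, m_{i-1}$ for every round $i$.

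First I would take a tower expectation to unroll the recursion. Iterating and using the law of total expectation,
\[
\E[m_i] \;=\; \E\!\bigl[\E[m_i \mid m_{i-1}]\bigr] \;\leq\; (1-\alpha)\,\E[m_{i-1}] \;\leq\; \cdots \;\leq\; (1-\alpha)^{i}\,\E[m_0].
\]
Since $m_0$ is at most the total number of edges in the input graph, we trivially have $\E[m_0] \leq n^2$. Choosing $i = C\log n$ with $C$ a sufficiently large constant (depending only on $\alpha$, so that $(1-\alpha)^{C} \leq n^{-12}$), we obtain $\E[m_{C \log n}] \leq n^{-10}$.

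Second, I would convert this expectation bound into a high-probability statement. Since $m_{C\log n}$ is a non-negative integer-valued random variable, Markov's inequality gives
\[
\Pr[m_{C \log n} \geq 1] \;\leq\; \E[m_{C \log n}] \;\leq\; n^{-10},
\]
which is exactly the claimed bound: with probability at least $1 - n^{-10}$, no edge both of whose endpoints still have remaining degree at least $\Delta^{0.6}$ survives after $O(\log n)$ rounds.

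The step that requires the most care is the very first one: the recursion $\E[m_i] \leq (1-\alpha)\,\E[m_{i-1}]$ relies on \Cref{lem:constantDecrease} holding for \emph{every} realization of $m_{i-1}$, including realizations in which the degree-estimation event $\mathcal{E}$ of the previous rounds has failed. This is fine because \Cref{lem:constantDecrease} already absorbs the failure of $\mathcal{E}$ in its proof via the $\Pr[\overline{\mathcal{E}}] \leq n^{-9}$ term, so the contraction bound is unconditional on the history. With that observation in hand, the rest of the argument is the short computation above.
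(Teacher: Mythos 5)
Your argument — iterating the conditional-expectation contraction from \Cref{lem:constantDecrease} via the tower rule to get $\E[m_{C\log n}] \leq (1-\alpha)^{C\log n} n^2 \leq n^{-10}$, then applying Markov's inequality to bound $\Pr[m_{C\log n} \geq 1]$ — is exactly the paper's proof, stated in the short paragraph preceding the corollary. Your added remark that \Cref{lem:constantDecrease} holds unconditionally on the history (since it already absorbs the failure of $\mathcal{E}$) is a correct and worthwhile clarification of why the tower-rule step is valid.
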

Since we remove all nodes of degree at least $\Delta^{0.6}$ that remain after $O(\log n)$ rounds, this might seem like we have already proven \Cref{lem:alg2p1}.
However, we have discarded nodes that were spoiled, without them or any of their neighbors necessarily being in the computed independent set.
Thus, we still need to show that each node does not have too many neighbors that were spoiled.
\begin{lemma}
	\label[lemma]{lem:notManySpoiled}
	For $\Delta \geq \deltaLB$, each node $v$ has at most $4\Delta^{0.6}$ neighbors that are spoiled during the execution of the algorithm with probability at least $1 - n^{-10}$.
\end{lemma}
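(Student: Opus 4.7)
The plan is to bound the number of spoiled neighbors of $v$ by a direct Chernoff argument, exploiting the fact that whether a neighbor $u$ becomes spoiled depends only on $u$'s own independent coin flips.

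First, I would observe that a node $u$ is spoiled precisely when it is either tagged (type (A)) or pre-marked (type (B)) in some round of the algorithm. Since the algorithm runs for a deterministic number $R = O(\log n)$ of rounds, and in each round where $u$ is still eligible it is tagged with probability $\Delta^{-0.5}$ and independently pre-marked with probability $\frac{1}{2\Delta^{0.6}}$, a simple union bound over rounds yields
\[
  \Pr[u \text{ is ever spoiled}] \;\leq\; R \cdot \Delta^{-0.5} + R \cdot \tfrac{1}{2\Delta^{0.6}} \;=\; O\!\left(\tfrac{\log n}{\Delta^{0.5}}\right).
\]
Crucially, this event depends only on the two independent sequences of coin flips performed by $u$, so across different neighbors of $v$ the corresponding indicators are mutually independent.

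Next, fix a node $v$ and let $X_u$ be the indicator that neighbor $u$ is spoiled at some point. Writing $X = \sum_{u \in N(v)} X_u$ and using $|N(v)| \leq \Delta$,
\[
  \E[X] \;\leq\; \Delta \cdot O\!\left(\tfrac{\log n}{\Delta^{0.5}}\right) \;=\; O(\Delta^{0.5} \log n).
\]
Since $\Delta \geq \Omega(\log^{20} n)$, we have $\log n \leq \Delta^{0.05}$, so $\E[X] \leq O(\Delta^{0.55})$, which is asymptotically much smaller than $4\Delta^{0.6}$. In particular, for $\Delta$ above the threshold the gap between $\E[X]$ and $4\Delta^{0.6}$ is at least a factor of $\log^{2} n$.

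Finally, since the $X_u$'s are independent $\{0,1\}$ random variables, I would apply a standard Chernoff bound to get
\[
  \Pr\!\left[ X \geq 4\Delta^{0.6} \right] \;\leq\; \exp\!\left( -\Omega(\Delta^{0.6}) \right) \;\leq\; n^{-12},
\]
where the last inequality uses $\Delta^{0.6} \geq \log^{12} n$. A union bound over all $n$ choices of $v$ then gives the claimed failure probability of at most $n^{-10}$. I do not anticipate any substantive obstacle here: the only thing to be careful about is verifying the independence of $X_u$ across neighbors (which holds because each $X_u$ is measurable with respect to $u$'s private randomness and the algorithm's round count is non-random), and checking that the $\Delta \geq \Omega(\log^{20} n)$ assumption gives enough slack to absorb the $\log n$ factors both in the expectation and in the concentration inequality.
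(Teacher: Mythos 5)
Your proposal is correct and follows essentially the same route as the paper's proof. The key observations are identical: a neighbor $u$ is spoiled exactly when it is tagged or pre-marked, each of which happens with probability at most $\Delta^{-0.5}$ per round; over $O(\log n)$ rounds this yields a per-neighbor spoiling probability of $O(\log n / \Delta^{0.5})$; these events are independent across distinct neighbors because each depends only on $u$'s private coin flips; and the assumption $\Delta \geq \Omega(\log^{20} n)$ gives enough slack to make the Chernoff tail at threshold $4\Delta^{0.6}$ decay as $\exp(-\Omega(\Delta^{0.6})) \ll n^{-12}$, after which a union bound over $v$ finishes. The only cosmetic difference is that you bound the combined count of tagged-or-pre-marked neighbors in a single Chernoff application, whereas the paper bounds the tagged neighbors and the pre-marked neighbors separately by $2\Delta^{0.6}$ each and then union bounds the two failure events; both yield the same $4\Delta^{0.6}$ threshold and the same $1 - n^{-10}$ guarantee.
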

\begin{proof}
	We will show that each node $v$ has at most $4 \Delta^{0.6}$ neighbors that were ever sampled (in the form of being tagged or pre-marked).
	Since nodes can only be spoiled after they were sampled, this proves the lemma.

	Let us first look at the tagging of nodes.
	For each node $u$, let $X_u$ be the indicator variable that $u$ was ever tagged in one of the $O(\log n)$ rounds.
	Then, the number of tagged neighbors of $v$ is $X = \sum_{u \in N(v)} X_u$.
	In each round, we have a probability of $\Delta^{-0.5}$ to tag $u$.
	Thus, the probability $\Pr[X_u = 1]$ that $u$ is tagged in any round is at most $O(\log n) \cdot \Delta^{-0.5} \leq \Delta^{-0.4}$ since $\Delta \geq \deltaLB$.
	This means that $\E[X] \leq \Delta^{0.6}$.
	Since all $X_u$ are independent, we can use Chernoff's Bound to obtain
	\begin{align*}
		\Pr[X \geq 2 \Delta^{0.6}]
		\leq \Pr \left[ X - \E[X] \geq \frac{\Delta^{0.6}}{\E[X]} \E[X] \right]
		 & \leq \exp \left( - \frac{\frac{\Delta^{1.2}}{\E[X]^2}\E[X]}{2 + \frac{\Delta^{0.6}}{\E[X]}} \right) \\
		 & \leq \exp \left( - \frac{\Delta^{1.2}}{2\E[X] + \Delta^{0.6}} \right)                               \\
		 & \leq \exp \left( - \Delta^{0.6}/3 \right) \ll n^{-11},
	\end{align*}
	where we again used $\Delta \geq \deltaLB$ in the last inequality.
	To sum up, this means that at most $2 \Delta^{0.6}$ neighbors of $v$ are ever tagged during the algorithm with probability at least $1 - n^{-10}$.

	For the marking, we can use the same kind of argument.
	Let $Y_u$ be the indicator variable that $u$ is pre-marked in any round.
	Since the pre-marking probability is capped, we have that $\Pr[Y_u = 1] \leq \frac{1}{2 \Delta^{0.6}}$.
	We can again define $Y = \sum_{u \in N(v)} Y_u$ as the number of neighbors of $v$ that are ever pre-marked.
	Since $\Pr[Y_u = 1] \leq \Pr[X_u = 1]$, we also have that $\Pr[Y \geq 2 \Delta^{0.6}] \leq \Pr[X \geq 2 \Delta^{0.6}] \leq n^{-11}$.

	Finally, the probability that either the number of tagged nodes or the number of pre-marked nodes exceeds $2 \Delta^{0.6}$ is at most $2 n^{-11} \leq n^{-10}$.
	Thus, the probability that the number of spoiled neighbors is less than $4 \Delta^{0.6}$ is at most $1 - n^{-10}$.
\end{proof}
With this additional fact, we can finally prove \Cref{lem:alg2p1}:
\begin{proof}[Proof of \Cref{lem:alg2p1}]
	First, note that the algorithm always computes an independent set.
	Because we add an independent set of nodes in each iteration and then remove all their neighbors, the union of all those sets is still an independent set.

	Next, let us consider the maximum degree in the remaining graph.
	In the last step, all nodes of degree at least $4 \Delta^{0.6}$ are removed if they form an independent set, which by \Cref{cor:noHighDegree} happens with probability at least $1 - n^{-10}$.
	From \Cref{lem:notManySpoiled} we have that the number of spoiled neighbors (even inactive ones) is at most $4 \Delta^{0.6}$ as well, with probability at least $1 - n^{-10}$.
	Thus, with probability $1 - 2 n^{-10} \geq 1 - n^{-6}$ the total number of remaining neighbors is at most $8 \Delta^{0.6} \ll \Delta^{0.7}$.

	It remains to prove the time and energy complexities.
	For the time complexity, the algorithm runs for $O(\log n)$ rounds, each of which corresponds to three rounds of communication.
	This communication is possible in the \congest model since both the tagging and decision to join the independent set can be exchanged with 1-bit messages.
	For the second sub-round, it is sufficient for a node $v$ to send the number of tagged neighbors $A_v$ to each neighbor $u$.
	All other computations can then also be performed by $u$, and since $A_v$ is an integer and at most $n$, this can be sent in a $O(\log n)$-bit message.
	Thus, the algorithm runs in $O(\log n)$ rounds of the \congest model.

	For the energy complexity, we proceed in the same way as in phase I of the first algorithm.
	In the beginning, each node $v$ performs the sampling for being tagged or pre-marked.
	Let $r_v$ be the first round in which $v$ is sampled of either type.
	Note that $v$ can also be sampled of both kinds in the same round.
	Since this is the first round that $v$ is sampled, it cannot be spoiled yet, but it still needs to know whether or not it is still active.
	We can again do this by using \Cref{lem:devAndConq} with $T = O(\log n)$ to get a set $S_{r_v}$ of rounds in which a node $v$ will be awake.
	Then, we add a fourth sub-round to each round, in which nodes will inform their neighbors whether or not they joined the independent set in a previous round.
	This allows us to execute the same algorithm, but with each node being awake in only $O(\log \log n)$ rounds, which proves our bound on the energy complexity.
\end{proof}

\subsection{Phase III}
\label{sec:alg2p3}
While the second phase is identical to the first algorithm, there is a slight change in the third phase.
We are now interested in a different trade-off between round and energy complexity.
Since the first phase has energy complexity $O(\log^2 \log n)$, we want to now minimize the round complexity of the third phase, while still keeping the energy complexity below $O(\log^2 \log n)$.

The best such trade-off we are able to achieve is the following:
A round complexity of $O(\log n \cdot \log \log n \cdot \log^* n)$ and an energy complexity of $O(\log \log n \cdot \log^* n)$.
This result has already been proven in (the full version of) previous work \cite[Theorem 5.2]{barenboim2021deterministic, barenboim2021deterministicarxiv} so we refer to their paper for a full proof.
Since the adaptation to our algorithm is minor, we also give an outline here.

In the step, where we compute a maximal matching on $H_L$, we will compute a coloring with $O(1)$ colors instead of $O(\log \log n)$ colors, which will save an $O(\log \log n)$ factor in the running time.
This can be achieved by running Linial's algorithm for $O(\log^* n)$ rounds instead of just 2.
However, it comes at the cost that in each round of Linial's algorithm we need to communicate between neighboring clusters.
One such round of communication requires each node to be awake for $O(1)$ rounds, which are $O(\log^* n)$ rounds for the coloring step, giving the final complexities.

\subsection{Putting Everything Together}

With the described changes to the first and third phases of the algorithm, we can give the proof of our second main result.
\begin{proof}[Proof of \Cref{thm:alg2}]
  First, we execute the algorithm from \Cref{lem:alg2p1}, and remove the independent set together with its neighbors from the graph.
  Since the degree of the remaining graph has now dropped to $O(\log^{20} n)$, we can use the algorithm from \Cref{lem:alg1p2} now.
  This again computes an independent set, which we remove from the graph together with their neighbors.
  The remaining nodes are clustered into $O(\log \log n)$-diameter clusters, and each connected component of the remaining graph contains at most $\poly \log n$ nodes, and $O(\log n / \log \log n)$.
  We treat each remaining connected component as an independent graph, and compute an MIS on each of them in parallel.
  By replacing the algorithm from \Cref{lem:mergeClusters} with the algorithm from \cite[Theorem 5.2]{barenboim2021deterministic} in the proof of \Cref{lem:alg1p3}, we get the same guarantees as in \Cref{lem:alg1p3}, but in $O(\log n \cdot \log \log n  \cdot \log^* n)$ rounds, of which a node is awake for $O(\log \log n \cdot \log^*n)$ of them.
  This algorithm is then executed on all connected components in parallel.
  The resulting set is maximal in the original graph, if the first two phases succeed, and the third phase computes an MIS in each cluster.
  By a union bound over all clusters, the latter occurs with probability at least $1 - n^{-9}$.
  Combining this with the success probabilities of the first two phases, the probability that the algorithm computes an MIS is $1 - n^{-8}$.

  The algorithm runs for $O(\log n \cdot \log \log n) + O(\log \log n) + O(\log n \cdot \log \log n \cdot \log^* n) = O(\log n \cdot \log \log n \cdot \log^* n)$ rounds, and each node is awake for $O(\log^2 \log n)$ rounds.
\end{proof}

\section{Constant Average Energy Complexity}
\label{sec:average}
In this section, we show that the previous algorithms can be extended to have $O(1)$ average energy complexity, with high probability.
There will be two parts to achieve this:
First, we show that Phase I of both algorithms already has constant average energy complexity.
Second, we provide a new algorithm that works in between phases I and II.
Its goal is to reduce the number of nodes to $O(n / \log^2 \log n)$, which will mean that the algorithms from phases II and III only incur a total energy complexity of $O(n)$ on these remaining nodes, which is $O(1)$ averaged over all nodes.

For this, we will change the parameters of the algorithm from \Cref{sec:alg1p1}, with the goal of further reducing the degree in the remaining graph to $\poly \log \log n$.
However, this will come at the cost of some nodes failing and having to be discarded from the algorithm.
We will show that the probability that a single node fails is at most $1 / \poly \log n$, which (with some extra steps due to dependencies between nodes) will allow us to conclude that at most $O(n / \log n)$ nodes fail.
We will ignore these failed nodes, and deal with them later.
All nodes that do not fail will succeed and leave us with a graph of degree $\poly \log \log n$.

For the successful nodes, that now have degree at most $\poly \log \log n$, we use an algorithm from \cite{ghaffari2022average}, with the parameters adapted to our setting, to reduce the number of remaining nodes to $n / \log^2 \log n$.
Together with the failed nodes, this will mean that $O(n \log^2 \log n)$ nodes remain, in a subgraph of degree at most $\poly(\log n)$, so we can use the algorithms from phases II and III while still having an average energy complexity.

\subsection{Phase I}
\label{sec:constp1}
For the first algorithm, we have that maximum sampling probability per round is $\frac{\Delta}{2^T}$ for $T = \log \Delta - 2 \log \log n$, which is $\frac{1}{\log^2 n}$.
The probability that a node is ever sampled is thus at most
\begin{align*}
	\sum_{i = 0}^{T} c \cdot \log n \cdot \frac{\Delta}{2^i} \leq c \cdot \log n \cdot \frac{2}{\log^2 n} \leq 2c / \log n.
\end{align*}
Since nodes are sampled independently, by Chernoff's Bound, the number of nodes sampled is $O(n / \log n)$ with high probability.
And since each sampled node is awake for $O(\log \log n)$ rounds, the average awake time, or energy complexity, is $O(1)$.

For the second algorithm, we have that the recursion stops once we reach a graph with maximum degree $O(\log^{20} n)$.
Between tagging and pre-marking, the probability that a node is sampled of either kind in a given round is at most $\Delta^{-0.5} \leq 1 / \log^{10} n$.
Thus, the probability that a node is ever sampled is at most $\Omega( 1 / \log^9 n)$.
Again, by Chernoff's Bound, the number of nodes sampled is $O(n / \log n)$ with high probability.
And again, since each sampled node is awake for at most $O(\log \log n)$ rounds, the average energy complexity is $O(1)$.

\subsection{Phase I-II}
As with Phase II previously, this will be the same for both algorithms.
However, unlike before, this section will be the main technical and novel part to obtain constant average energy complexity.
We will give an algorithm that removes all but $O(n / \log^2 \log n)$ nodes from the graph.
Executing any algorithm with worst-case energy complexity $O(\log^2 \log n)$ on these remaining nodes, then only adds $O(1)$ to the total node-averaged energy complexity.
Thus, the following will directly allow us to get algorithms with node-averaged energy complexity $O(1)$.
\begin{lemma}
	\label[lemma]{lem:reduceNodes}
	There is an algorithm that, given a graph of maximum degree $\Delta_2 = \poly \log n$ computes an independent set $S$, such that removing all nodes in or adjacent to $S$ leaves at most $O(n / \log^2 \log n)$ nodes.
	The algorithm runs in $O(\log^2 \log n)$ clock rounds, has $O(\log \log n)$ worst-case, and $O(1)$ node-averaged energy complexity.
\end{lemma}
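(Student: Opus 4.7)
The plan is to combine two ingredients: a truncated variant of the Phase-I algorithm that brings the maximum degree down to $\poly \log \log n$ (at the cost of a small set of ``failed'' nodes set aside for later), followed by an invocation of the node-reduction routine of Ghaffari and Portmann \cite{ghaffari2022average} on the resulting much-lower-degree graph.

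First, I would rerun the regularized-Luby scheme of \Cref{lem:alg1p1} with re-calibrated parameters: $T = O(\log \log n)$ iterations (enough to close the gap between $\Delta_2 = \poly \log n$ and the target $\poly \log \log n$), each iteration comprising $c \log \log n$ rounds in place of $c \log n$. Repeating the calculations of \Cref{lem:Bi} and \Cref{lem:Ai} with $\log \log n$ substituted for $\log n$ throughout yields that for any fixed node the per-iteration invariants hold with probability $1 - 1/\poly \log n$; union-bounding over the $O(\log \log n)$ iterations gives a single-node failure probability of $1/\poly \log n$. Any node that ever violates an invariant is declared \emph{failed} and set aside. The total horizon is $O(\log^2 \log n)$ rounds, so the divide-and-conquer schedule of \Cref{lem:devAndConq} keeps each sampled node awake in only $O(\log \log \log n) = O(\log \log n)$ rounds, and the node-averaged energy is $O(1)$ by the same argument as in \Cref{sec:constp1}, since the per-node total probability of ever being sampled across all iterations sums to $O(1/\poly \log \log n)$.

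The main obstacle is controlling the total number of failed nodes w.h.p.: the failure event at $v$ depends on the randomness within an $O(1)$-hop neighborhood of $v$, so these events are not independent. I plan to handle this with the standard distance-partitioning trick, color the nodes with $\poly \log n$ colors so that any two nodes of the same color are farther apart than the dependency radius, apply a Chernoff bound within each color class (where the indicators are now independent and each failure event has probability $1/\poly \log n$), and then take a union bound across color classes. Because the per-node failure probability is $1/\poly \log n$ while we only need to rule out the event of more than $\Theta(n/\log n)$ failures, the Chernoff tails comfortably absorb the $\poly \log n$ union-bound loss, giving at most $O(n/\log n)$ failed nodes w.h.p.

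Second, on the surviving subgraph of maximum degree $\poly \log \log n$, I would invoke the node-reduction module of \cite{ghaffari2022average}, with parameters rescaled for the new degree regime, running it long enough to bring the number of undecided nodes down to $O(n / \log^2 \log n)$. Since the degree is now exponentially smaller than before, the per-epoch cost is $O(\log \log n)$, so this step fits in $O(\log \log n)$ additional rounds with worst-case energy $O(\log \log n)$, while the $O(1)$ average-energy guarantee of \cite{ghaffari2022average} carries over verbatim. Summing up, the failed nodes from Stage 1 ($O(n/\log n)$) together with the undecided nodes from Stage 2 ($O(n/\log^2 \log n)$) yield at most $O(n/\log^2 \log n)$ remaining nodes, which proves the lemma with the claimed round, worst-case energy, and average-energy complexities.
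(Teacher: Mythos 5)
Your overall decomposition matches the paper's: truncate the Phase-I degree-reduction to a $\poly\log\log n$ target with a small set of ``failed'' nodes set aside, bound the failed set, then invoke the \cite{ghaffari2022average} node-reduction module on the low-degree residual graph. Two issues, one a genuine gap and one a calculational error that can be repaired.

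The genuine gap is that you never say how a node learns that it has violated an invariant. Under the \Cref{lem:devAndConq} schedule, a node $v$ is awake only in the $O(\log |S_{r_v}|)$ rounds tied to its unique sampling round, during which it never counts its current active/spoiled degree, so it cannot test conditions of the form ``more than $\frac{\Delta}{2^{i+1}}$ active non-spoiled neighbors remain.'' The paper's version of this step (\Cref{lem:moreDegReduced}) introduces exactly this as ``the third, and biggest, difference'': at the end of every iteration all (participating) nodes wake for three extra rounds to exchange activity/spoiled status and to broadcast whether they have just declared themselves failed. This also dominates the worst-case energy: the divide-and-conquer schedule on $T=O(\log^2\log n)$ rounds contributes only $O(\log T)=O(\log\log\log n)$ awake rounds, and it is the $O(\log\log n)$ per-iteration check rounds that push the worst-case energy to the claimed $O(\log\log n)$. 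As written, you appear to credit the $O(\log\log n)$ bound to the scheduler alone.

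The second issue concerns your concentration argument. You claim the failure event at $v$ depends on an $O(1)$-hop neighborhood, hence a $\poly\log n$-color distance coloring decouples the indicators. In fact after $T=O(\log^2\log n)$ rounds the status of $v$'s neighbors (active/spoiled/failed) is a function of the randomness in an $O(T)$-hop ball, so the dependency radius is $\Theta(\log^2\log n)$ and a distance coloring requires $\Delta_2^{O(\log^2\log n)} = \log^{O(\log^2\log n)} n = 2^{O(\log^3\log n)}$ colors, not $\poly\log n$. With the corrected count your argument does still close, since $2^{O(\log^3\log n)}\cdot\poly\log n = n^{o(1)}$, so each color class still has $n^{1-o(1)}$ nodes with expected $n^{1-o(1)}$ failures and Chernoff plus a union bound over the $n^{o(1)}$ classes gives the desired $O(n/\log n)$ bound w.h.p.; but the parameters as you state them do not. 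The paper instead invokes a Chernoff-type inequality for bounded-dependency sums (Pemmaraju's extension, \Cref{thm:Chernoff-Extension}) with dependency degree $\log^{O(\log^2\log n)} n$, which reaches the same conclusion without having to build the coloring explicitly. Your route is a valid alternative once the radius is corrected.
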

There are two main steps, the first has the goal of reducing the degree to $\poly \log \log n$, which we will be able to achieve for all but a small number of nodes.
With the maximum degree this low, we can use an algorithm from \cite{ghaffari2022average} to remove all but $O(n / \log^2 \log n)$ nodes.
Together with the nodes that failed in the first step, there will still be $O(n / \log^2 \log n)$ nodes that remain in total.

For the first step, we formally show an algorithm achieving the following.
\begin{lemma}
	\label[lemma]{lem:moreDegReduced}
	There is an algorithm that, given a graph of maximum degree $\Delta_2 = \poly \log n$ computes an independent set $S$ and a partition of all nodes that are not in or adjacent to $S$ in the following two sets:
	\begin{enumerate}
		\item[A:] The nodes in $A$ form an induced subgraph of degree at most $O(\log^{100} \log n)$, and
		\item[F:] Each node $v$ is part of the set of failed nodes $F$ with probability $1 / \log^{10} n$.
	\end{enumerate}
	The algorithm runs in $O(\log^2 \log n)$ rounds, each node is awake for at most $O(\log \log \log n)$ rounds, and the average energy complexity is $O(1)$ with high probability.
\end{lemma}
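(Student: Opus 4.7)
The plan is to adapt the regularized Luby algorithm of \Cref{lem:alg1p1} with two parameter changes tailored to the present setting. Since $\Delta_2 = \poly(\log n)$, we have $\log \Delta_2 = O(\log \log n)$, so I will run the algorithm for $T = \log \Delta_2 - \Theta(\log \log \log n)$ outer iterations (rather than $\log \Delta - 2\log\log n$), and within each outer iteration perform only $c \log \log n$ marking rounds (rather than $c \log n$). The per-iteration mechanism itself, including the ``mark at most once, then spoil'' modification, is kept verbatim. The total round budget is $T \cdot c \log \log n = O(\log^2 \log n)$, matching the claim.

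Next, I would re-run the invariant analysis of \Cref{lem:Bi} and \Cref{lem:Ai} with the reduced round count. The per-round constant kill probability $1/40$ from \Cref{lem:Bi} is unchanged, so a fixed high-degree active node survives all $c \log \log n$ rounds of an iteration with probability $e^{-\Omega(\log \log n)} = 1/\poly(\log n)$; an analogous Chernoff bound re-establishes the spoiled-neighbor count of \Cref{lem:Ai}. Choosing $c$ sufficiently large and taking a union bound over the $T = O(\log \log n)$ iterations, I can arrange that any fixed node maintains both invariants throughout with probability at least $1 - 1/\log^{10} n$. This is the weaker, \emph{per-node} guarantee we need; crucially, we do not union bound over $V$.

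With the invariants in hand I declare $S$ to be the independent set of nodes that joined during the algorithm, $F$ to be the nodes (not in $S$ nor adjacent to $S$) that violated either invariant at some iteration, and $A$ the remaining active nodes. An $A$-node has at most $\Delta_2 / 2^T = O(\log^{100} \log n)$ active non-spoiled neighbors and at most $T \cdot C \log \log n = O(\log^3 \log n)$ active spoiled neighbors, both bounded by $O(\log^{100} \log n)$ after tuning $T$. For the energy bounds I reuse the scheduling from \Cref{sec:alg1p1}: each node samples all its coins upfront, learns its (at most one) mark round, and applies \Cref{lem:devAndConq} with $T' = O(\log^2 \log n)$, giving a worst-case awake budget of $O(\log T') = O(\log \log \log n)$. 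For the average bound, the total probability any fixed node is ever sampled is
\[
  \sum_{i=0}^{T} c \log \log n \cdot \frac{2^i}{10 \Delta_2} \;=\; O\!\left(\frac{\log \log n}{\log^{100} \log n}\right),
\]
so the expected total awake-time summed over $V$ is $o(n)$ and a Chernoff bound gives a node-averaged energy complexity of $O(1)$ w.h.p.

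The main obstacle I anticipate is tightening the invariant analysis to the short-iteration regime. The Chernoff estimate behind \Cref{lem:Ai} previously yielded $n^{-12}$ because the relevant expectation was $\Theta(\log n)$; here the corresponding expectation is only $\Theta(\log \log n)$, so I will need to recheck that the tail still clears $1/\log^{12} n$ after the constants are chosen. A related subtlety is that $A(i)$ is originally phrased ``for every active node,'' which silently hides a union bound over $V$; I plan to localize the statement to the specific node under consideration, relying only on $B(i{-}1)$ along its own incident edges, so that the per-node failure bound of $1/\log^{10} n$ genuinely does not depend on $n$.
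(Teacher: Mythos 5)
Your proposal matches the paper's high-level plan (re-parameterize the regularized Luby of \Cref{lem:alg1p1} to $O(\log\log n)$ rounds per iteration and $T = \log\Delta_2 - \Theta(\log\log\log n)$ iterations, then aim for a per-node, non-union-bounded invariant guarantee), but it misses the mechanism that makes the per-node analysis sound, and the ``localization'' you propose to paper over this does not work.

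The crux is that the per-round constant removal probability in \Cref{lem:Bi} is not local to $v$. The proof bounds the probability that a neighbor $u$ is the unique marked node in \emph{both} its own and $v$'s neighborhood by $\frac{2^i}{\Delta}\bigl(1-\frac{2^i}{\Delta}\bigr)^{2\Delta/2^i}$; the exponent $2\Delta/2^i$ uses the $\Delta/2^i$ degree bound for $u$ as well as for $v$. If some neighbor $u$ has silently violated its own invariant (too many active non-spoiled neighbors), the exponent can blow up and the $1/40$ kill probability evaporates. Conditioning on ``$v$ has not failed yet'' does not control $u$'s degree. Your plan to ``localize the statement to the specific node under consideration, relying only on $B(i-1)$ along its own incident edges'' therefore does not close the argument. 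The paper resolves this by adding a third modification you do not have: at the end of each iteration, all nodes wake up for three extra rounds, learn their current active/spoiled status and that of their neighbors, locally detect whether they have \emph{failed} (too many active-spoiled or active-non-spoiled neighbors), broadcast that they are failed, and are henceforth treated as inactive. Once failed nodes are excised, the per-node analysis of a still-non-failed $v$ can legitimately assume the degree bound for $v$'s surviving neighbors, and the $1/\log^{12}n$-per-iteration failure probability (and the $1/\log^{10}n$ total bound over $O(\log\log n)$ iterations) goes through.

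This change also alters the energy accounting. Because all nodes must wake for $O(1)$ rounds at the end of each of the $O(\log\log n)$ iterations, the paper's own proof derives a worst-case energy of $O(\log\log n)$ (this is also what the later proof of \Cref{lem:reduceNodes} uses), not the $O(\log\log\log n)$ you obtain by invoking \Cref{lem:devAndConq} alone; the latter schedule gives a node no opportunity to learn or report its failed status. Your average-energy computation via the sampling probability is in the right spirit and matches the paper's, but it too needs to be re-examined once the per-iteration synchronization wake-ups are added.
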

\begin{proof}
	The algorithm is almost the same as in the proof of \Cref{lem:alg1p1}, with three key differences:
	First, we only execute $O(\log \log n)$ rounds in each iteration, and, second, we stop once we reach iteration $T = \log \Delta_2 - 100 \log \log \log n$.
	The third, and biggest, difference is that at the end of each iteration, all nodes are awake for three rounds.
	In the first round, they learn if they are inactive, due to one of their neighbors joining the independent set.
	In the second round, they inform their neighbors of their current status, i.e. whether or not they are still active or spoiled.
	Then, each node can figure out if it is \emph{failed}.
	We now call a node $v$ \emph{failed} at the end of iteration $i$, if it has (A) more than $(i + 1) \cdot C \log \log n$ active and spoiled neighbors, or (B) more than $\frac{\Delta}{2^{(i+1)}}$ active and non-spoiled neighbors.
	Finally, nodes that failed add themselves to $F$ and inform their neighbors of this in the third round
	For all future rounds, we consider nodes that have failed as inactive.

	Note that the degree guarantee of the nodes in $A$ trivially holds.
	Since $i \leq \log \Delta_2 - 100 \log \log \log n$, all nodes that are not failed have degree at most $O(\log^{100} \log n)$
	So it remains to be shown that nodes fail with probability at most $1 / \log^{10} n$.
	We focus on one node $v$ and one iteration $i$, and show that if $v$ is not failed at the beginning of the iteration, it is failed with probability at most $1 / \log^{11} n$ at the end of $i$.
	There are two ways in which a node could fail, (A) and (B); we will first consider (B).
	In this case, $v$ has more than $\frac{\Delta}{2^{(i+1)}}$ active and non-spoiled neighbors remain in each round of iteration $i$.
	In each such round, as in \Cref{lem:Bi}, $v$ is removed with probability at least $1/40$.
	Over $O(\log \log n)$ rounds, the probability that this never occurs is at most $1 / \log^{12} n$ since different rounds are independent.

	The other reason a node can fail is (A) more than $(i + 1) \cdot C \log \log n$ of its neighbors are spoiled.
	Since $v$ was not spoiled at the end of the previous iteration, this can only occur if more than $C \log \log n$ neighbors are spoiled in this iteration.
	In the same way as in \Cref{lem:Ai}, we can also bound this probability using Chernoff's Bound to be at most $1 / \log^{12} n$.

	Thus, in one iteration, a node fails with probability at most $1 / 2 \log^{12} n \leq 1 / \log^{11} n$.
	Over all $O(\log \log n)$ iterations, this probability is at most $1 / \log^{10} n$ as desired.

	For the time complexity, we have $O(\log \Delta) = O(\log \log n)$ iterations of $O(\log \log n)$ rounds, which is at most $O(\log^2 \log n)$ rounds in total.
	For the worst-case energy complexity, we now have a simpler argument:
	Each node is only sampled in one iteration, thus it can just be awake for all those $O(\log \log n)$ rounds, and sleep for the other iterations.
	In addition to that, nodes are also awake at the end of each iteration for 3 rounds.
	Since there are $O(\log \log n)$ iterations, there are a total of $O(\log \log n)$ rounds that a node that was ever sampled is awake.

	So for the average energy complexity, we still need to argue that the number of nodes that are sampled is at most $O(n / \log \log n)$.
	As in the previous subsection, the probability that node is ever sampled is at most $O(\log \log n) \cdot \frac{2}{\log^{100} \log n} \leq 1 / \log \log n$.
	Since nodes are sampled independently, by Chernoff's Bound, the number of sampled nodes is at most $O(n / \log \log n)$ with high probability, concluding our proof.
\end{proof}
We now know that each node joined $F$ with probability at most $1 / \log^{10} n$, however, these events are not independent, so we cannot directly use the standard version of Chernoff's Bound.
What will still make it possible to show that there are at most $O(n / \log n)$ nodes in $F$ with high probability is that nodes at distance $\Omega(\log^2 \log n)$ are independently added to $F$.
We can think of the algorithm as each node first producing a random string that it will then use for all its random choices during the algorithm.
Then, the output of the algorithm at a node $v$ after $T$ rounds can also be described as a mapping of the $T$-hop neighborhood, and all random strings within this neighborhood, to the output of $v$.
Thus, all nodes at a distance more than $2T$ will have their outputs be independent of each other.

With this observation, we can now formally state and prove the result.
\begin{lemma}
  \label[lemma]{lem:notManyFailed}
	The number of failed nodes after the algorithm of \Cref{lem:moreDegReduced}, i.e. the size of the set $F$, is at most $O(n / \log n)$ with high probability.
\end{lemma}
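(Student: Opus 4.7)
The plan is to exploit the locality of the algorithm from \Cref{lem:moreDegReduced} to convert the (non-independent) failure events into independent ones via a distance-coloring argument, and then apply standard Chernoff concentration and a union bound. The core observation, already foreshadowed before the statement, is that the entire algorithm runs for only $T = O(\log^2 \log n)$ rounds, so the indicator that a node $v$ lands in $F$ is a deterministic function of the random strings of the nodes in the $T$-hop neighborhood of $v$. Consequently, the failure events of any two nodes at graph distance strictly greater than $2T$ are mutually independent.

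Concretely, I would first form the power graph $G^{2T}$ whose edges connect any pair of nodes at distance at most $2T$ in $G$. Since $G$ has maximum degree $\Delta_2 = \poly \log n$, the maximum degree of $G^{2T}$ is at most $K := \Delta_2^{2T} = (\poly \log n)^{O(\log^2 \log n)} = 2^{O(\log^3 \log n)}$, which is sub-polynomial in $n$. Greedily color $G^{2T}$ with at most $K+1$ colors; this step is just for the analysis and is not executed by the algorithm. Within each color class, the failure events are mutually independent by the locality observation above.

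Next, fix a color class $C_j$ of size $n_j$, and let $X_j = \sum_{v \in C_j} \mathbbm{1}[v \in F]$. By \Cref{lem:moreDegReduced}, $\Pr[v \in F] \leq 1/\log^{10} n$, so $\E[X_j] \leq n_j / \log^{10} n$. Since the summands are independent, a standard Chernoff bound gives, for a sufficiently large constant $C$,
\[
\Pr\bigl[X_j \geq \max(2\E[X_j],\, C \log n)\bigr] \leq n^{-10}.
\]
Taking a union bound over the at most $K+1$ color classes (and noting $K \leq n$ for $n$ large), with probability at least $1 - n^{-9}$ every $X_j$ obeys this bound. Summing,
\[
|F| = \sum_j X_j \;\leq\; \frac{2n}{\log^{10} n} + (K+1)\, C \log n \;=\; O\!\left(\frac{n}{\log n}\right),
\]
where the second term is negligible because $K \log n = 2^{O(\log^3 \log n)} \cdot \log n$ is sub-polynomial in $n$, hence far smaller than $n/\log n$ for large enough $n$.

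The main obstacle in this plan is a conceptual rather than a technical one: one must be careful to justify that ``failing'' is indeed a function only of the $T$-hop random choices, even though the \emph{status} of a node (active/spoiled/inactive) is updated adaptively. This is handled by the random-tape viewpoint described in the paragraph preceding the statement---each node fixes a random string once and the algorithm's outputs at distance $>2T$ are then measurable with respect to disjoint coordinates. Once this is granted, the distance-$2T$ coloring plus Chernoff is routine; the slight subtlety that some color classes $C_j$ may be small (so that $\E[X_j]$ is tiny) is absorbed by the additive $C \log n$ slack in the Chernoff bound and the fact that the total number of classes is only sub-polynomial.
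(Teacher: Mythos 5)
Your proof is correct and rests on the same key observation as the paper's --- that a node's failure indicator $X_v$ is a deterministic function of the random tapes in its $T$-hop neighborhood with $T = O(\log^2 \log n)$, so that nodes at distance greater than $2T$ have independent failure events --- but it takes a genuinely different route at the concentration step. The paper invokes an off-the-shelf Chernoff extension for bounded-dependency variables (Theorem~\ref{thm:Chernoff-Extension}, from~\cite{pemmaraju2001equitable}), plugging in the dependency degree $\Delta' = \log^{O(\log^2 \log n)} n$ and setting the deviation threshold to $O(n/\log n)$. You instead make the dependency structure explicit: you pass to the power graph $G^{2T}$, greedily color it with $K+1 = 2^{O(\log^3 \log n)} + 1$ colors so that each color class has mutually independent failure indicators (the $T$-hop balls of any two same-colored nodes are disjoint), apply standard Chernoff within each class with an additive $C\log n$ slack to handle tiny classes, and union bound over the sub-polynomially many classes. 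Both arguments go through; yours is more elementary and self-contained (it essentially re-derives the bounded-dependency concentration bound in the special case needed), whereas the paper's is shorter but leans on a quoted theorem. One nit worth polishing if this were to be included: when you write $\Pr[X_j \geq \max(2\E[X_j], C\log n)] \leq n^{-10}$, it is worth stating which Chernoff tail you are using in the regime $\E[X_j] \ll C\log n$ (e.g.\ $\Pr[X \geq t] \leq 2^{-t}$ for $t \geq 6\E[X]$), since the multiplicative form alone does not directly give the claim there. This is routine, but since the small-expectation color classes are exactly the subtle case you flag, it is worth spelling out.
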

\begin{proof}
	Let $X_v$ be the indicator variable for the event that $v$ failed.
	We know that $\E[X_v] \leq 1 / \log^{12} n$.
	Since the input graph has degree $\poly \log n$, and the algorithm runs for $O(\log^2 \log n)$ rounds, each $X_v$ is independent of all but at most $\log^{O(\log^2 \log n)} n$ random variables $X_u$ for other nodes $u$.
	This now allows us to use the following extension of Chernoff's Bound.
	\begin{theorem}[Chernoff Bound's Extension to Bounded-Dependency Cases~\cite{pemmaraju2001equitable}]
		\label[theorem]{thm:Chernoff-Extension}
		Let $X_1$, \dots, $X_N$ be binary random variables, and let $a_1, \dots, a_n$ be coefficients in $[0,1]$.
		Suppose that each $X_i$ is mutually independent of all other variables except for at most $\Delta'$ of them.
		Let $\bar{X}=\sum_{i=1}^{N} a_i \cdot X_i$, and $\mu=\mathbb{E}[\bar{X}].$
		Then, for any $\delta \geq 0$, we have $$\Pr\big(|\bar{X} - \mathbb{E}[\bar{X}]| \geq \delta \mathbb{E}[\bar{X}]\big) \leq 8(\Delta'+1) \cdot exp(-\frac{\mathbb{E}[\bar{X}] \delta}{3(\Delta'+1)}).$$
	\end{theorem}
	Applied to our setting, we have that $X_i = X_v$ all nodes $v$, and $a_i = 1$.
	Since we have $\E[\bar{X}] \leq n / \log^{10} n$, the probability that more than $O(n / \log n)$ nodes are in $F$ is at most $\Pr [ \Abs{X - \E[\bar{X}] } \geq C \cdot n / \log n ]$.
	Thus, for the extended Chernoff's Bound we have dependency degree $\Delta' = \log^{O(\log^2 \log n)} n$, and $\delta$ such that $\delta \E[\bar{X}] =  C \cdot n / \log )$.
	With these parameters, the probability bound given by \cref{thm:Chernoff-Extension} is $n^{-c}$ for any constant $c$, and a sufficiently large constant $C$.
	Thus, with high probability there are at most $O(n / \log n)$ failed nodes.
\end{proof}
So what remains is to show that in a graph of degree $O(\log^{100} \log n)$ we can compute an independent set, such that removing it together with its neighbors only leaves $O(n / \log^2 \log n)$ remaining nodes.
For this, we will use the algorithm from \cite[Section 3.2]{ghaffari2022average}.
While they do not state the guarantees of their algorithm in its full generality, they state their proof in a way such that they directly also prove the following:
\begin{lemma}[Lemma 3.2 in \cite{ghaffari2022average}]
  \label[lemma]{lem:reduceNodesOld}
  There is an algorithm that, given a graph of maximum degree $d = O(\log^2 n)$, and a parameter $k = O(\log \log n)$ for the number of iterations, finds an independent set $S$, such that removing $S$ and its neighbors from the graph leaves at most $O(n / 2^k)$ with high probability.
  The algorithm runs in $O(k \cdot \log^3 d)$ rounds of the \congest model, and has node-averaged energy complexity $O(1)$, with high probability.
\end{lemma}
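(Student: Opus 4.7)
The plan is to iteratively shatter the graph by running a truncated version of Ghaffari's randomized MIS algorithm~\cite{gmis} for $k$ iterations, each of length $T = O(\log^3 d)$ clock rounds. Each iteration is designed to kill (add to the IS or its neighborhood) at least half of the currently-surviving nodes with high probability, so that after the $k$-th iteration only $O(n/2^k)$ nodes remain. Because the survivor count decreases geometrically and each surviving node contributes only $O(1)$ expected awake rounds per iteration (argued below), the total expected awake time summed over all nodes is $O(n)$, giving $O(1)$ node-averaged energy.

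For the halving guarantee, Ghaffari's analysis shows that on a graph of max degree $d$, any fixed vertex is killed with at least a constant probability after $O(\log d)$ of Ghaffari's rounds; running for $T=\Theta(\log^3 d)$ rounds boosts this per-vertex kill probability to $1-1/\poly(d)$. The events ``node $v$ survived iteration $i$'' are not independent, but each depends only on the $T$-hop neighborhood of $v$, which has size $d^{O(T)} = n^{o(1)}$ since $d=O(\log^2 n)$ and $T=O(\log^3\log n)$. The bounded-dependency version of Chernoff's bound (\Cref{thm:Chernoff-Extension}) then certifies that at least $n_i/2$ of the $n_i$ surviving nodes are killed in iteration $i$ with probability $1-n^{-\Omega(1)}$, and a union bound over the $k$ iterations gives the $O(n/2^k)$ survivor bound with high probability.

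For the energy complexity, I would leverage the ``per-round constant kill'' property of Ghaffari's MIS: in each round, any still-alive node $v$ is killed with constant probability, so $v$'s expected number of awake rounds within one iteration is $O(1)$. Once killed, $v$ sleeps permanently for the rest of the algorithm, broadcasting a single one-shot notification in an added sub-round so its neighbors detect the event and can in turn sleep after being dominated. Summing expected awake rounds over all $n$ nodes and all $k$ iterations and using the geometric decay of the survivor count gives $O(n)$ total awake-round mass in expectation, which upgrades to high probability via another application of \Cref{thm:Chernoff-Extension} on the (bounded-dependency) per-node awake counts.

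The main obstacle is verifying that the truncated MIS still runs correctly when most nodes are asleep: alive nodes and their still-alive neighbors must continue coordinating normally, while killed neighbors must not interfere beyond sending their single notification. I would address this by checking that Ghaffari's potential-function analysis is local, so killed nodes can be safely pruned from the effective graph the moment their notification goes out, and by scheduling each node's awakenings using the divide-and-conquer schedule of \Cref{lem:devAndConq} so that still-alive neighbors can always learn of a recent kill within $O(\log T)$ awake rounds. The $O(\log^3 d)$ per-iteration round budget comfortably absorbs the constant-factor sub-round overhead needed for notifications and scheduling.
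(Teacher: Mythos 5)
The paper does not prove this lemma at all. It is stated as a citation to Lemma~3.2 of Ghaffari and Portmann~\cite{ghaffari2022average}, and the sentence immediately preceding it explicitly defers: ``While they do not state the guarantees of their algorithm in its full generality, they state their proof in a way such that they directly also prove the following.'' So there is no in-paper proof to compare against; you have attempted to reprove an imported black-box result.

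On the merits of your reconstruction, there is a concrete gap. You propose to schedule awakenings via \Cref{lem:devAndConq}, but that lemma is tailored to algorithms in which each node has a single, pre-committed action round $r_v$ (as in this paper's Phase~I, where marking probabilities are fixed in advance and a node is spoiled after its first sample). Ghaffari's MIS algorithm~\cite{gmis} is adaptive: each node maintains a desire level $p_v(t)$ that it recomputes every round from the current state of its neighborhood, so there is no single $r_v$ to feed to the schedule, and a node that sleeps through most rounds cannot maintain $p_v$ correctly. This is exactly the obstruction the paper spends most of Section~\ref{sec:alg2p1} working around in a simpler setting. Relatedly, your energy accounting contains two claims that are never reconciled: on one hand an $O(1)$ expected awake bound from ``constant per-round kill probability,'' on the other an $O(\log T)$ awake cost per surviving node from the schedule. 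If every node that participates in iteration $i$ pays $\Theta(\log T)=\Theta(\log\log d)$ awake rounds, then even with geometric decay of the survivor count the total awake mass is $\Theta(n\log\log d)$, i.e.\ average energy $\Theta(\log\log d)$ rather than $O(1)$. Closing this requires either an argument that most nodes are never sampled at all (as in the paper's Phase~I, but that relied on one-shot non-adaptive sampling) or a different mechanism, which is what the cited work actually supplies. Your ``per-round constant kill'' invocation of Ghaffari's analysis is also imprecise: the guarantee in~\cite{gmis} is an exponentially decaying tail on the decision time after an initial $O(\log d)$ rounds, not a uniform constant kill probability in every round.
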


\paragraph{Wrap-Up}
With these parts, we can now finally give the proof of the main result of this subsection.
\begin{proof}[Proof of \Cref{lem:reduceNodes}]
  First, we will execute the algorithm from \Cref{lem:moreDegReduced}.
  This will result in each node either being removed from the graph or being in the sets $A$ or $F$.
  We will ignore the nodes in $F$ (i.e. they will not be awake for the remainder of the algorithm), and only consider nodes from $A$ for now.
  Since the nodes from $A$ induce a subgraph of maximum degree $O(\log^{100} \log n)$, we can use \Cref{lem:reduceNodesOld} with $d = O(\log^{100} \log n)$ and $k = 2 \log \log \log n$.
  This gives an algorithm that runs in $O(\log^4 \log \log n)$ rounds, has node-averaged energy complexity $O(1)$, and leaves at most $O(n / \log^2 \log n)$ nodes remaining, with high probability.
  Thus, together with the nodes from $F$, at most $O(n / \log^2 \log n)$ nodes remain, with high probability.

  For the total round complexity, we spend $O(\log^2 \log n) + O(\log^4 \log \log n) = O(\log^2 \log n)$ clock rounds.
  The worst-case energy complexity is $O(\log \log n) + O(\log^4 \log \log n) = O(\log \log n)$, where for the second algorithm, we just use the number of clock rounds as an upper bound on the number of rounds a node is awake.
  Finally, on average, each node is awake for $O(1)$ rounds in each of the two algorithms, with high probability.
  Thus the overall node-averaged energy complexity is $O(1)$ with high probability as well.
\end{proof}

\subsection{Putting Everything Together}

For the claimed algorithms with constant node-averaged energy complexity, we first execute the algorithms from \Cref{sec:alg1p1} or \Cref{sec:alg2p1} respectively.
As argued in \Cref{sec:constp1}, the node averaged energy complexity of these algorithms is constant.
Then, we use the algorithm from \Cref{lem:reduceNodes} in between phases I and II, which again only requires nodes to be awake for a constant number of rounds on average.
Since the number of nodes is now only $O(n / \log^2 \log n)$, any algorithm with worst-case energy complexity $O(\log^2 \log n)$ will only require $O(1)$ average energy complexity.
Thus, we can proceed with Phases II and III of the previous algorithms, as described in \Cref{sec:alg1p2} and \Cref{sec:alg1p3,sec:alg2p3}.
Finally, since the algorithm from \Cref{lem:reduceNodes} has worst-case energy complexity $O(\log \log n)$ and requires at most $O(\log^2 \log n)$ many clock rounds, the previous bounds on both the number of awake and the total number of rounds still hold.

\section{Conclusion and Open Problems}

In this work, we have given round efficient algorithms with low worst-case energy complexity for the maximal independent set problem.
Allowing for a worst-case energy complexity of $O(\log^2 \log n)$, the round complexity nearly matches the $O(\log n)$ round complexity of Luby's algorithm~\cite{luby86, alon1986fast}.
For the best-known energy complexity of $O(\log \log n)$, we also made progress on improving the round complexity to $O(\log^2 n)$.
And finally, we showed that all this can also be achieved with constant node-averaged energy complexity.

There are two natural ways, in which this could be improved.
The first is improving the round complexity while maintaining the best-known worst-case energy complexity, i.e., getting an algorithm with round complexity $O(\log n)$ and worst-case energy complexity $O(\log \log n)$.
The second way in which the results could be improved is by lowering the worst-case energy complexity to $o(\log \log n)$.
We note that $O(\log \log n)$ is possibly a barrier for algorithms that require a sort of ``update'' between all pairs of rounds, i.e., for algorithms that need to solve the problem described in \Cref{lem:devAndConq}.
As any MIS algorithm needs $\widetilde{\Omega}{\sqrt{\log n}}$ rounds~\cite{kuhn16_jacm}, using this as a subroutine would always require an energy complexity of $\Omega(\log \log n)$.
Formalizing this, or proving any $\omega(1)$ lower bound on the worst-case energy complexity, would also improve our understanding of energy-efficient distributed algorithms.

\bibliography{ref}
\bibliographystyle{alpha}
\appendix

\end{document}